\documentclass[11pt,a4paper]{article}

\usepackage[a4paper,margin=20mm]{geometry}
\usepackage[T1]{fontenc}
\usepackage[utf8]{inputenc}
\usepackage{mathptmx}
\usepackage{graphicx,epsfig}
\usepackage{dsfont,amsfonts,amsbsy,mathrsfs,amscd}
\usepackage{amsmath,amssymb,amsthm}
\usepackage{mathtools}
\usepackage{booktabs}
\usepackage{array}
\usepackage{xcolor}
\usepackage{tikz}
\usepackage{float}
\usepackage{placeins}
\usepackage{microtype}
\usepackage[hidelinks]{hyperref}
\usetikzlibrary{arrows.meta, positioning, shapes.geometric, shapes.misc, fit, backgrounds}

\theoremstyle{plain}
\newtheorem{theorem}{Theorem}[section]
\newtheorem{lemma}[theorem]{Lemma}
\newtheorem{corollary}[theorem]{Corollary}
\theoremstyle{definition}
\newtheorem{definition}[theorem]{Definition}
\newtheorem{example}[theorem]{Example}
\theoremstyle{remark}
\newtheorem{remark}[theorem]{Remark}
\newtheorem*{remark*}{Remark}

\DeclareMathOperator{\sort}{sort}

\DeclareMathOperator{\Decompose}{Decompose}
\DeclareMathOperator{\Inv}{Inv}

\newcommand{\True}{\textnormal{\textsc{true}}}
\newcommand{\False}{\textnormal{\textsc{false}}}
\newcommand{\keywords}[1]{%
  \par\smallskip\noindent\textbf{Keywords:} #1\par\medskip
}

\begin{document}

\title{Multiqubit orthogonal product basis}

\author{Yvkai~Zhao \qquad Lin~Chen\\[0.6em]
\parbox{0.82\textwidth}{\centering\normalsize The authors are with the School of Mathematical Sciences, Beihang University, Beijing 100191, China.}}
\date{}

\maketitle

\begin{abstract}
We use edge-colored complete multigraphs to study complete orthogonal product bases (OPBs) in $n$-qubit systems. We prove that two OPBs are equivalent if and only if their associated multigraphs are isomorphic, thereby reducing OPB classification to graph isomorphism. Within this framework, we establish the upper bound $v\le 2^n-1$ on the number of variables of an $n$-qubit OPB. We also derive $\binom{a_{n-1}+1}{2}\le a_n\le B_{2^{n-1}}^n$ for the number $a_n$ of equivalence classes of $n$-qubit OPBs, where $B_m$ denotes the number of partitions of an $m$-element set. These bounds imply the asymptotic behavior $a_n=2^{2^{n+o(n)}}$. For every OPB, the connectivity pattern of its color layers characterizes local irreducibility, which in turn implies indistinguishability by finite-round local operations and classical communication (LOCC); the existence of a complete color-splitting tree characterizes perfect distinguishability by finite-round LOCC. Finally, we give an algorithm for testing OPB equivalence and a recursive graph algorithm that constructs a finite-round LOCC protocol whenever such perfect discrimination is possible.
\end{abstract}

\keywords{
orthogonal product basis; formal matrix; edge-colored complete multigraph; graph isomorphism; Bell number.
}

\tableofcontents
\clearpage

\section{Introduction}
\label{sec:intro}

Quantum entanglement is a central resource in quantum information theory, yet nonlocality without entanglement'' shows that it is not a prerequisite for nonclassical phenomena. Complete bases of pairwise orthogonal product states---\emph{orthogonal product bases} (OPBs)---may contain no entanglement and nevertheless fail to be perfectly distinguishable by local operations and classical communication (LOCC) \cite{ben-nonloc}. Subsequent work studied local distinguishability and nonlocality of orthogonal product states in bipartite and multiqubit systems \cite{feng-shil,lebl,chen-jiang}. OPBs are also closely related to \emph{unextendible product bases} (UPBs): a UPB is an incomplete orthogonal product set whose orthogonal complement contains no product state, and such sets provide a standard construction of bound-entangled states \cite{ben-unext,dmss}. UPBs and OPBs further arise in quantum information hiding, quantum secret sharing, and entanglement-witness constructions.

A systematic classification of OPBs, enumerating their structure up to a natural equivalence, is therefore foundational. This paper concerns complete orthogonal product bases in the multiqubit space.Motivated by the orthogonality structure of OPBs, we study their discrete combinatorial representation through edge-colored complete multigraphs.

Since orthogonality is the essential combinatorial structure of a product basis, graph representations are natural. Orthogonality graphs and related graph-theoretic models have been used to study unextendible product bases, local distinguishability, and isomorphism of nonlocal sets of orthogonal product states \cite{dmss,shi-graph,xu-iso}. Another representation particularly relevant to the present work is the \emph{formal-matrix} framework for multiqubit OPBs \cite{chen-djok}. We use this formalism as the starting point and convert the classification problem into an isomorphism problem for a highly structured class of edge-colored complete multigraphs.

Given an OPB formal matrix, we associate an \emph{edge-colored complete multigraph}: the vertices are the $N=2^n$ row indices, each variable pair contributes a balanced complete bipartite graph, each column contributes one color layer, and the union of the $n$ color layers covers the edges of the complete graph $K_N$. The connected components within a color layer record the variable pairs occurring in the corresponding column, while the bipartition of each component records the two orthogonal directions of that variable pair. Thus the graph retains precisely the combinatorial information relevant to equivalence, while forgetting variable names and choices of orientation. Figure~\ref{fig:example} illustrates this construction for a three-qubit OPB by displaying the individual color layers and their merged edge-colored complete multigraph.

Our main results, presented in the order in which they appear in the paper, are the following.

\begin{enumerate}

\item \textbf{Structural properties and the variable bound.}
We first establish the balancedness property of an OPB formal matrix: whenever a variable pair $\{x_\alpha,x_\alpha^\perp\}$ occurs in a column, the two symbols occur equally many times. Consequently, every variable pair gives a balanced complete bipartite graph $K_{d_\alpha,d_\alpha}$ in the corresponding color layer.

The multigraph structure also yields the variable bound. Orienting the two
sides of each variable component associates every vertex with a
codimension-$n$ subcube of $\{0,1\}^v$. Since the variable components cover
all edges of $K_N$, distinct vertices are separated by some balanced
complete bipartite component, and the resulting subcubes are therefore
pairwise disjoint. They form a tight minimal partition of $\{0,1\}^v$, so
Tarsi's lemma gives
\[
v\le 2^n-1.
\]

\item \textbf{Graph characterization of OPB equivalence.}
Our central structural result is the equivalence theorem (Theorem~\ref{thm:equiv}):
\[
M_1\sim M_2
\quad\Longleftrightarrow\quad
\Gamma(M_1)\cong\Gamma(M_2).
\]
Thus two OPB formal matrices are equivalent exactly when their associated edge-colored complete multigraphs are isomorphic, allowing both vertex and color permutations. The connected components of each layer recover the variable pairs, and the uniqueness of the bipartition of a connected bipartite graph recovers the possible direction flips. Hence the classification of multiqubit OPBs is reduced to an isomorphism problem for this structured family of multigraphs.

\item \textbf{Local reducibility and LOCC distinguishability.}
For a standard-form realization, the connected components of a color layer determine whether the corresponding party can initiate a nontrivial orthogonality-preserving measurement. We prove that $\mathcal B(M)$ is locally irreducible if and only if $\Gamma_j(M)$ is disconnected for every $j$ (Theorem~\ref{thm:local-reducibility} and Corollary~\ref{cor:local-irreducibility}). Local reducibility only determines whether an informative first measurement is possible, so complete discrimination requires a recursive criterion. We define a color-splitting tree and prove that $\mathcal B(M)$ is perfectly distinguishable by finite-round LOCC if and only if $\Gamma(M)$ admits a complete such tree (Theorem~\ref{thm:locc-tree}). Whenever the tree exists, it is both a graph certificate and an explicit protocol of local projective measurements.

Criteria stated directly through admissible local measurements or recursive decompositions of the basis can obscure the common combinatorial structure of successive LOCC rounds \cite{de-rinaldis,lebl}. In the multigraph representation, the first informative measurement is detected by color-layer connectivity, and every subsequent step is obtained by applying the same test to an induced layer. The resulting splitting tree is therefore a compact certificate of distinguishability as well as a transparent description of the adaptive protocol.

\item \textbf{Bounds on the number of OPB classes.}
To quantify the combinatorial structure of the multigraph representation, we
first enumerate individual color layers. For a prescribed spectrum
$\lambda=(d_1,\ldots,d_k)\vdash N/2$, we obtain an exact formula for the
number $L_N(\lambda)$ of labeled layers and a recurrence for the total number
$L_N$ of possible single layers.

We then turn to the number $a_n$ of equivalence classes of $n$-qubit OPBs.
A recursive construction combining two $n$-qubit OPBs gives
\[
a_{n+1}\ge \binom{a_n+1}{2}.
\]
For the upper bound, the multigraph structure induces, in each color layer,
a forced perfect matching of $2^{n-1}$ edges, whose set partitions encode
the possible variable branches. This yields
\[
a_n\le B_{2^{n-1}}^n,
\]
where $B_m$ denotes the $m$-th Bell number.
The forced-matching construction is illustrated explicitly for the three-qubit case in Section~\ref{sec:ub}.

\item \textbf{Asymptotic growth and numerical results.}
Combining the preceding lower bound with the standard asymptotics of the Bell numbers gives
\[
2^{c2^n}
\le
a_n
\le
2^{Cn^22^n}
\]
for suitable constants $c,C>0$. In particular,
\[
\log\log a_n=n+o(n),
\]
or equivalently,
\[
a_n=2^{2^{n+o(n)}}.
\]
Thus the number of essentially different multiqubit OPBs grows doubly exponentially with the number of qubits. Table~\ref{tab:numerics} gives numerical values and bounds for $n=2,\ldots,8$, illustrating the rapid growth predicted by the asymptotic analysis.

\item \textbf{Equivalence-testing algorithm.}
Finally, the multigraph characterization leads to an exact algorithm for deciding whether two OPB formal matrices are equivalent. The first phase compares inexpensive necessary invariants, including the global spectrum, column spectra, and edge-multiplicity data. If these tests are passed, the second phase calls an existing exact graph-isomorphism algorithm on the associated edge-colored complete multigraphs.

We prove that the algorithm returns \True{} if and only if the two input formal matrices are equivalent. We also formulate a recursive graph algorithm for the LOCC criterion: it detects the locally irreducible case from disconnected color layers and records a complete color-splitting tree whenever finite-round LOCC discrimination is possible. At each internal node, the recorded color identifies the party to act, the associated variable pair specifies its two-outcome projective measurement, and the observed outcome selects the next subtree. The full tree thus represents the adaptive LOCC protocol, with each leaf identifying a unique basis state.
\end{enumerate}

The paper is organized as follows. Section~\ref{sec:notation} introduces OPB formal matrices and their edge-colored complete multigraphs, proves the balancedness and variable-bound results, gives an explicit three-qubit example, and establishes the equivalence theorem. Section~\ref{sec:local-locc} characterizes local irreducibility and finite-round LOCC distinguishability in terms of color-layer connectivity and recursive splitting trees. Section~\ref{sec:bounds} develops the enumeration of individual color layers, derives lower and upper bounds on the number of equivalence classes, and obtains the asymptotic growth together with numerical results. Section~\ref{sec:alg} gives graph algorithms for equivalence testing and for constructing a finite-round LOCC discrimination protocol. Section~\ref{sec:concl} concludes the paper and discusses directions for further study.
\section{Edge-colored complete multigraphs and the equivalence theorem}
\label{sec:notation}
\subsection{Notation and OPB formal matrices}
\label{sec:not-mat}
\begin{definition}[OPB formal matrix]
\label{def:fm}
Let $n \ge 1$ and $N=2^n$. Let $\{x_\alpha,x_\alpha^\perp:\alpha\in D\}$ be an alphabet, where $D$ is an index set and $x_\alpha^\perp$ denotes the unit vector orthogonal to $x_\alpha$. An $N\times n$ matrix $M=(M_{rj})_{r\in V,\,1\le j\le n}$ with entries from this alphabet is called an $n$-qubit \emph{OPB formal matrix} if:
\begin{enumerate}
\item each row contains exactly $n$ pairwise distinct letters;
\item any two distinct rows $r\ne s$ are orthogonal: there exist a column $j$ and an index $\alpha$ such that $\{M_{rj},M_{sj}\}=\{x_\alpha,x_\alpha^\perp\}$;
\item realizability: there exist unit vectors $u_\alpha\in\mathbb{C}^2$ with $u_\alpha\perp u_\alpha^\perp$ such that, upon replacing each letter by the corresponding vector, the row vectors form an orthonormal basis of $(\mathbb{C}^2)^{\otimes n}$.
\end{enumerate}
Row $r$ corresponds to the product state $\bigotimes_{j=1}^n \widehat{M}_{rj}$, where $\widehat{M}_{rj}$ is the vector replacing $M_{rj}$. The set of all $n$-qubit OPB formal matrices is denoted by $\mathcal O(n)$.
\end{definition}
\begin{definition}[Equivalence]
\label{def:equiv}
Two OPB formal matrices $M_1,M_2\in \mathcal O(n)$ are \emph{equivalent}, written $M_1\sim M_2$, if $M_2$ is obtained from $M_1$ by the following operations:
\begin{enumerate}
\item[(i)] \emph{row permutations};
\item[(ii)] \emph{column permutations};
\item[(iii)] \emph{variable renamings} $x_\alpha\mapsto x_{\eta(\alpha)}$, where $\eta$ is a bijection of the variable index sets, compatible with the column permutation;
\item[(iv)] \emph{direction flips}: for a set of variables, interchanging $x_\alpha$ and $x_\alpha^\perp$.
\end{enumerate}
These operations are precisely the notational transformations that do not change the physical essence of a product basis: reordering rows, reordering tensor factors, renaming the directions in each qubit, and interchanging the two basis vectors of a qubit direction pair. Equivalence classes therefore correspond to essentially different OPBs.
\end{definition}
Throughout the remainder of the paper, $n\ge1$ is fixed. We write $N=2^n$, $V=\{1,\dots,N\}$, and $D=\{1,\dots,v\}$, where $v$ is called the number of \emph{free variables}. By Definition~\ref{def:fm}, an $n$-qubit OPB formal matrix is written as
$M=(M_{rj})_{r\in V,\,1\le j\le n}\in \mathcal O(n)$,
and row $r$ represents a product state. We only consider variables that actually occur in the matrix.
Each variable pair $\{x_\alpha,x_\alpha^\perp\}$ occurs only in a fixed column; we denote this column by $\chi(\alpha)\in\{1,\dots,n\}$.

We adopt the \emph{standard form} of OPB formal matrices: distinct variable pairs in the same column have distinct direction pairs, i.e., the realizing vectors $u_\alpha$ and $u_\beta$ of two distinct variables $x_\alpha\ne x_\beta$ in the same column are neither parallel nor orthogonal. For each variable $\alpha\in D$, define
\[
\begin{gathered}
P_\alpha(M)=\{r\in V:M_{r,\chi(\alpha)}=x_\alpha\},\\
N_\alpha(M)=\{r\in V:M_{r,\chi(\alpha)}=x_\alpha^\perp\}.
\end{gathered}
\]
Write $d_\alpha=|P_\alpha(M)|$; by Lemma~\ref{lem:balance},
$|P_\alpha(M)|=|N_\alpha(M)|=d_\alpha>0$.
Arranging the part sizes in nonincreasing order gives the \emph{spectrum} of $M$,
$\pi(M)=(d_1,\dots,d_v)$.
For each column $j$, let
$A_j=\{\alpha\in D:\chi(\alpha)=j\}$.
Since column $j$ has $N$ entries, the family
$\{P_\alpha(M)\sqcup N_\alpha(M):\alpha\in A_j\}$
partitions $V$, and
$\sum_{\alpha\in A_j}d_\alpha=2^{n-1}=\frac{N}{2}$.
\subsection{Edge-colored complete multigraphs}
\label{sec:col-graph}
\begin{definition}[Edge-colored complete multigraph]
\label{def:gamma}
Given $M\in \mathcal O(n)$, construct the edge-colored complete multigraph $\Gamma(M)$ as follows:
\begin{itemize}
\item the vertex set is $V$;
\item for each variable $\alpha\in D$ and each $r\in P_\alpha(M)$, $s\in N_\alpha(M)$, the pair $\{r,s\}$ is joined by an edge of color $\chi(\alpha)$;
\item equivalently, column $j$ contributes the \emph{$j$-th color layer},
\[
\Gamma_j(M)=\bigsqcup_{\alpha\in A_j}K(P_\alpha(M),N_\alpha(M));
\]
\item merging the $n$ color layers, with multi-edges allowed, yields the edge-colored complete multigraph on $V$.
\end{itemize}
By condition~(2) of Definition~\ref{def:fm}, any two distinct vertices are adjacent in some layer; conversely, two adjacent vertices in the same layer correspond to orthogonal rows. Therefore the $n$ color layers cover exactly the edges of the complete graph $K_N$:
\[
E(K_N)
=
\bigcup_{\alpha=1}^{v}E\bigl(K(P_\alpha(M),N_\alpha(M))\bigr)
=
\bigcup_{j=1}^{n}E(\Gamma_j(M)).
\]
For an unordered pair $\{r,s\}\subseteq V$ with $r\ne s$ and a color $j$, write $\mu_M(\{r,s\},j)\in\{0,1\}$ for the multiplicity of the edge $\{r,s\}$ in the color layer $\Gamma_j(M)$. Each layer is a simple graph, so the multiplicity is $0$ or $1$.
\end{definition}
\begin{definition}[Equivalence of edge-colored complete multigraphs]
\label{def:gamma-equiv}
Two edge-colored complete multigraphs $\Gamma(M_1)$ and $\Gamma(M_2)$ are \emph{equivalent}, written $\Gamma(M_1)\cong\Gamma(M_2)$, if there exist a vertex permutation $\sigma:V\to V$ and a color permutation $\tau:\{1,\dots,n\}\to\{1,\dots,n\}$ such that for every pair $\{r,s\}$ and every color $j$,
\[
\mu_{M_1}(\{r,s\},j)
=
\mu_{M_2}(\{\sigma(r),\sigma(s)\},\tau(j)).
\]
\end{definition}
\begin{remark}
\label{rem:ops}
The four equivalence operations on OPB formal matrices correspond to transformations of the edge-colored complete multigraphs as follows:
\begin{center}
\resizebox{\columnwidth}{!}{%
\begin{tabular}{cc}
\toprule
operation on $M_1\sim M_2$ & corresponding transformation of $\Gamma(M_1)\cong\Gamma(M_2)$ \\
\midrule
row permutation & vertex permutation \\
column permutation & color permutation \\
variable renaming & none (implicit) \\
direction flip $x_\alpha\leftrightarrow x_\alpha^\perp$ & none (implicit) \\
\bottomrule
\end{tabular}%
}
\end{center}
Variable renamings and direction flips do not alter any edge structure of the graph: renaming changes only the names of variables, and a direction flip merely interchanges the two parts of a bipartite graph. Hence they generate no independent graph operations. This observation is the intuitive basis of the equivalence theorem in Section~\ref{sec:equiv}.
\end{remark}
\subsection{Balancedness}
\label{sec:balance}
\begin{lemma}[Balancedness]
\label{lem:balance}
For every formal matrix $M\in \mathcal O(n)$, every column $j$, and every letter $x_\alpha$ appearing in column $j$, one has $|P_\alpha(M)|=|N_\alpha(M)|>0$.
\end{lemma}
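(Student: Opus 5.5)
We argue via the completeness of the realized basis, i.e.\ the resolution of the identity. Fix a realization as in condition~(3) of Definition~\ref{def:fm}, so the $N$ row states $|\psi_r\rangle=\bigotimes_{j}\widehat M_{rj}$ form an orthonormal basis and
\[
\sum_{r\in V}|\psi_r\rangle\langle\psi_r|=I_{2^n}.
\]
Fix a column $j$ and a variable $\alpha$ with $\chi(\alpha)=j$. Take the partial trace over all tensor factors except the $j$-th. Each summand contributes $\widehat M_{rj}\widehat M_{rj}^{\dagger}$, since the other factors are unit vectors. This gives
\[
\sum_{r\in V}\widehat M_{rj}\widehat M_{rj}^{\dagger}=2^{n-1}I_2 .
\]

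Next we group the rows by the variable occurring in column $j$. Write $p_\beta=|P_\beta(M)|$ and $q_\beta=|N_\beta(M)|$ for $\beta\in A_j$, and $\Pi_\beta=u_\beta u_\beta^\dagger$. Using $u_\beta^\perp(u_\beta^\perp)^\dagger=I_2-\Pi_\beta$, the identity becomes
\[
\sum_{\beta\in A_j}\bigl(q_\beta I_2+(p_\beta-q_\beta)\Pi_\beta\bigr)=2^{n-1}I_2 .
\]
Hence $\sum_\beta (p_\beta-q_\beta)\Pi_\beta$ is a multiple of $I_2$.

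This is the main obstacle. Rank-one projectors in different directions can combine to a multiple of the identity (e.g.\ three symmetric directions), so the identity alone does not force each coefficient $p_\beta-q_\beta$ to vanish. I would therefore first try a direct combinatorial argument, avoiding the trace identity.

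Fix $\alpha$ and a row $r\in P_\alpha(M)$. By condition~(2), every row $s$ with $M_{sj}\ne x_\alpha^\perp$ must be orthogonal to $r$ in some other column. Now restrict to the sub-basis of rows whose $j$-th entry is $x_\alpha$ or $x_\alpha^\perp$. Projecting the full space onto $u_\alpha\otimes(\mathbb C^2)^{\otimes(n-1)}$ is a more robust route. Consider the vectors $\phi_s=\bigotimes_{k\ne j}\widehat M_{sk}$.

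The states $u_\alpha\otimes\phi$ and $u_\alpha^\perp\otimes\phi$ cover complementary subspaces. To make this precise I would apply the trace identity to the operator $\Pi_\alpha\otimes I$ restricted to the rows in $P_\alpha\sqcup N_\alpha$. The goal is to show that these rows span $\mathcal H_\alpha\otimes(\mathbb C^2)^{\otimes(n-1)}$ for the 2-dimensional space $\mathcal H_\alpha=\mathbb C^2$.

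The key sub-claim is that the vectors $\{\phi_r:r\in P_\alpha\}$ are pairwise orthogonal, and likewise the vectors $\{\phi_s:s\in N_\alpha\}$. Both follow from condition~(2), since these rows agree in column $j$. For rows from different $A_j$-blocks, the standard form matters: $u_\beta$ is neither parallel nor orthogonal to $u_\alpha$, so such rows must be orthogonal off column $j$.

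Consequently, every row $s\notin P_\alpha\sqcup N_\alpha$ has $\phi_s$ orthogonal to all $\phi_r$ with $r\in P_\alpha\cup N_\alpha$. Therefore $W_\alpha=\mathrm{span}\{\phi_r:r\in P_\alpha\}$ and $\mathrm{span}\{\phi_r:r\in N_\alpha\}$ are both orthogonal to $\phi_s$ for all such $s$.

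Completeness now says the full space decomposes as $\mathbb C^2\otimes W\oplus(\text{rest})$. Here $W$ is the span of all $\phi$'s from $P_\alpha\cup N_\alpha$, and the rest lies in $\mathbb C^2\otimes W^\perp$, since those rows have $\phi_s\perp W$. Hence the rows of $P_\alpha\sqcup N_\alpha$ form an orthonormal basis of $\mathbb C^2\otimes W$. Compressing by $\Pi_\alpha\otimes I$ gives $p_\alpha=\dim W$, and compressing by $(I-\Pi_\alpha)\otimes I$ gives $q_\alpha=\dim W$.

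So $p_\alpha=q_\alpha$. Positivity follows because $x_\alpha$ appears, so $p_\alpha\ge1$.
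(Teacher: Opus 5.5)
Your final argument, after you set aside the trace-identity attempt, is correct and is essentially the paper's proof. In both, standard form forces every row outside $P_\alpha(M)\sqcup N_\alpha(M)$ to have its off-column-$j$ part orthogonal to $W=U+T$, and a dimension count on $\mathbb{C}^2\otimes W$ then gives $|P_\alpha(M)|=|N_\alpha(M)|=\dim W$. The only difference is bookkeeping: the paper uses the inequality $|C|\le 2(2^{n-1}-\dim W)$ together with $\dim W\ge\max(|P_\alpha(M)|,|N_\alpha(M)|)$, whereas you use spanning followed by compression with $\Pi_\alpha\otimes I$.
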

\begin{proof}
Since an OPB is a complete basis, the number of rows is $2^n$. Fix a letter $x_\alpha$ in column $j$ and set
\[
P_\alpha=\{r\in V:M_{rj}=x_\alpha\},
\qquad
N_\alpha=\{r\in V:M_{rj}=x_\alpha^\perp\},
\]
and
$C=\{r\in V:M_{rj}\ne x_\alpha \text{ and } M_{rj}\ne x_\alpha^\perp\}$.
Write $a=|P_\alpha|$ and $b=|N_\alpha|$.
Since the rows in $P_\alpha$ all have $x_\alpha$ as their $j$-th component, the full inner product of any two of them equals the inner product of the rows with the $j$-th component deleted, multiplied by $\langle x_\alpha,x_\alpha\rangle$. By orthogonality of the OPB, the rows of $P_\alpha$ remain pairwise orthogonal after deleting the $j$-th component, so they span an $a$-dimensional subspace $U$. 

Similarly, the rows of $N_\alpha$ are pairwise orthogonal after deleting the $j$-th component and span a $b$-dimensional subspace $T$. Since the $j$-th components of $P_\alpha$ and $N_\alpha$ are $x_\alpha$ and $x_\alpha^\perp$, respectively, $P_\alpha\cap N_\alpha=\varnothing$.
Column permutations of an OPB formal matrix do not affect balancedness, so we may move column $j$ to the last position. The rows in $P_\alpha$ can be written as $u\otimes x_\alpha$ with $u\in U$, and the rows in $N_\alpha$ as $w\otimes x_\alpha^\perp$ with $w\in T$. The tensor product of the remaining $n-1$ factor spaces has dimension $2^{n-1}$. Put $r=\dim(U+T)$; then
$\dim(U+T)^\perp=2^{n-1}-r$.

Consider any row $w_1\otimes\cdots\otimes w_{n-1}\otimes y$ of $C$, and write $w$ for its truncation to the first $n-1$ components. Since $y\ne x_\alpha,x_\alpha^\perp$, by the standard-form convention of Section~\ref{sec:not-mat}, the vector $y$ is neither parallel nor orthogonal to $x_\alpha$, so
\[
\langle y,x_\alpha\rangle\ne0
\qquad\text{and}\qquad
\langle y,x_\alpha^\perp\rangle\ne0.
\]
For this row to be orthogonal to all rows of $P_\alpha$ and $N_\alpha$, one must have $w\perp U$ and $w\perp T$, i.e., $w\in(U+T)^\perp$. Hence all rows of $C$ lie in $(U+T)^\perp\otimes H_n$, a space of dimension $2(2^{n-1}-r)$. Since the rows in $C$ must be pairwise orthogonal,
$|C|\le 2(2^{n-1}-r)$.

On the other hand,
$2^n = |P_\alpha|+|N_\alpha|+|C| = a+b+|C| \le a+b+2^n-2r$,
so $a+b\ge2r$. Since $r=\dim(U+T)$, we have $r\ge a$ and $r\ge b$, hence $2r\ge a+b$. Therefore $a+b=2r$, and equality gives $a=b=r$.
We only consider variables that actually occur, so $|P_\alpha|>0$ and $|N_\alpha|>0$. Hence
$|P_\alpha|=|N_\alpha|>0$ and $P_\alpha\cap N_\alpha=\varnothing$.
\end{proof}
\subsection{The bound on the number of variables}
\label{sec:varbound}
\begin{theorem}[Bound on the number of variables]
\label{thm:var}
For every $M\in \mathcal O(n)$, the number of free variables satisfies $v\le2^n-1$.
\end{theorem}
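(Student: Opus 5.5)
We follow the route indicated in the introduction: encode each row as a subcube of the Boolean cube $\{0,1\}^v$ and invoke Tarsi's lemma on tight minimal partitions. First fix an orientation of every variable pair, declaring $x_\alpha\mapsto 0$ and $x_\alpha^\perp\mapsto 1$. Each row $r$ contains exactly $n$ letters, one from each column. Each letter determines a variable $\alpha$ together with a bit $b_r(\alpha)\in\{0,1\}$. We associate to $r$ the subcube
\[
Q_r=\{z\in\{0,1\}^v : z_\alpha=b_r(\alpha)\text{ for every variable }\alpha\text{ occurring in row }r\}.
\]
Row $r$ uses $n$ distinct variables, one per column, because the variable pairs in a column are indexed by $A_j$ and the row has a single entry there. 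Hence $Q_r$ has codimension exactly $n$ and $|Q_r|=2^{v-n}$.

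Next we show that the $Q_r$ are pairwise disjoint. For $r\ne s$, condition~(2) of Definition~\ref{def:fm} gives a variable $\alpha$ with $r\in P_\alpha(M)$ and $s\in N_\alpha(M)$, or vice versa; equivalently, the edge $\{r,s\}$ of $K_N$ is covered by $K(P_\alpha,N_\alpha)$. Then both $Q_r$ and $Q_s$ fix coordinate $\alpha$, to opposite values, so they are disjoint. Counting gives $\sum_r|Q_r|=2^n\cdot2^{v-n}=2^v$, so the $Q_r$ partition $\{0,1\}^v$ into $N=2^n$ subcubes.

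It remains to put this partition in the form required by Tarsi's lemma. The lemma states that if $\{0,1\}^v$ is partitioned into $m$ subcubes (equivalently, $K_v$-type complete bipartite decompositions in Graham--Pollak form), and the partition is tight, meaning every coordinate is fixed by some cube, then $m\ge v+1$. Tightness holds because we only consider variables that actually occur: every $\alpha\in D$ appears in some row, so coordinate $\alpha$ is nonfree in that $Q_r$. Applying the lemma with $m=2^n$ gives $2^n\ge v+1$, that is, $v\le 2^n-1$.

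The main obstacle is matching the hypotheses of Tarsi's lemma exactly. Depending on the version cited, the lemma may require irreducibility or minimality of the partition, meaning no two cubes merge into a cube, rather than plain tightness. If that hypothesis cannot be verified directly, the fallback is a self-contained linear-algebra argument in the spirit of Graham--Pollak. Write the indicator of $Q_r$ as the polynomial $f_r(z)=\prod_{\alpha\in\mathrm{row}\,r}\ell_{r,\alpha}(z_\alpha)$, where $\ell_{r,\alpha}(z_\alpha)$ is $z_\alpha$ or $1-z_\alpha$. Then $\sum_r f_r\equiv1$ on the cube. Comparing, for each fixed $\alpha$, the parts that depend linearly on $z_\alpha$ yields a system of linear relations. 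Combined with the fact that every variable appears, this system is expected to show that the $v$ linear functionals are spanned by at most $N-1$ independent quantities, which gives $v\le N-1$. I would first try the tightness version of the lemma, since it fits the disjoint subcube partition constructed above most directly.
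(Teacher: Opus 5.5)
Your proposal is correct and follows the paper's proof essentially step for step. It uses the same codimension-$n$ subcubes $Q_r\subseteq\{0,1\}^v$, pairwise disjointness from the orthogonality condition, the count $2^n\cdot 2^{v-n}=2^v$ showing they partition the cube, tightness because every variable occurs, and Tarsi's lemma with $2^n$ subcubes. Your concern about the lemma's hypothesis is unnecessary, so you can drop the vague fallback: Tarsi's lemma needs a tight \emph{minimal cover}, not an irreducible partition, and any partition into nonempty subcubes is automatically a minimal cover, since deleting some $Q_r$ leaves its points uncovered---exactly the observation the paper makes.
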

This conclusion is not new: an equivalent bound was obtained via color forests of admissible colorings of the hypercube \cite{lebl}. To unify the language of this paper, we give another proof in the present multigraph framework.
\begin{proof}
For each variable pair $\{x_\alpha,x_\alpha^\perp\}$, choose one of its two directions and call it $0$ and the other $1$. For row $r$, define a partial assignment
$\rho_r:\{1,\dots,v\}\to\{0,1,\ast\}$
as follows:
\begin{enumerate}
\item if row $r$ contains $x_\alpha$, set $\rho_r(\alpha)=0$;
\item if row $r$ contains $x_\alpha^\perp$, set $\rho_r(\alpha)=1$;
\item if row $r$ does not contain the variable pair $\alpha$, set $\rho_r(\alpha)=\ast$.
\end{enumerate}
Since every row contains exactly $n$ pairwise distinct letters, $\rho_r$ fixes exactly $n$ variable coordinates. Let
\[
Q_r
=
\{z\in\{0,1\}^v:z_\alpha=\rho_r(\alpha)
\text{ whenever }\rho_r(\alpha)\ne\ast\}.
\]
Then $Q_r$ is a subcube of $\{0,1\}^v$ of codimension $n$, and
$|Q_r|=2^{v-n}$.
Take two rows $r\ne s$. By the covering condition
$E(K_N)=\bigcup_{\alpha=1}^{v}E(K(P_\alpha,N_\alpha))$,
there exists a variable $x_\alpha$ with $\{r,s\}\in E(K(P_\alpha,N_\alpha))$. Thus $r$ and $s$ lie on opposite sides of the complete bipartite graph. The two row assignments therefore fix $z_\alpha=0$ and $z_\alpha=1$, respectively, so $Q_r$ and $Q_s$ conflict on coordinate $\alpha$ and
$Q_r\cap Q_s=\varnothing$.
Therefore $Q_1,\dots,Q_N$ are pairwise disjoint subcubes, and
\[
\left|\bigcup_{r=1}^{N}Q_r\right|
=
\sum_{r=1}^{N}|Q_r|
=
N\cdot2^{v-n}
=
2^v.
\]
Since $\bigcup_{r=1}^{N}Q_r\subseteq\{0,1\}^v$ and $|\{0,1\}^v|=2^v$, necessarily
$\bigcup_{r=1}^{N}Q_r=\{0,1\}^v$.

Thus $\mathcal{Q}(M)=\{Q_1,\dots,Q_N\}$ is a \emph{subcube partition} of $\{0,1\}^v$.
We claim that $\mathcal{Q}(M)$ is a \emph{tight minimal cover}. It is a minimal cover because it is a partition, so deleting any $Q_r$ leaves the points it contains uncovered. It is tight because every variable occurs in the matrix; hence each Boolean coordinate is fixed by at least one subcube.

By Tarsi's lemma \cite{aharoni,firmus}, if $\{0,1\}^{m}$ admits a tight minimal cover consisting of $L$ subcubes, then $L\ge m+1$. Here $m=v$ and $L=N=2^n$, so
$2^n=N\ge v+1$,
and therefore
$v\le 2^n-1$.
\end{proof}
\subsection{An example}
\label{sec:example}
Consider the three-qubit OPB formal matrix
\[
M=
\begin{pmatrix}
x_1 & x_2 & x_3 \\
x_1 & x_2 & x_3^\perp \\
x_1 & x_2^\perp & x_3 \\
x_1 & x_2^\perp & x_3^\perp \\
x_1^\perp & x_2 & x_4 \\
x_1^\perp & x_2 & x_4^\perp \\
x_1^\perp & x_2^\perp & x_4 \\
x_1^\perp & x_2^\perp & x_4^\perp
\end{pmatrix}.
\]
Taking $x_1=x_2=x_3=x_4=|0\rangle$ yields the computational basis $\{|000\rangle,\dots,|111\rangle\}$, so this is a legitimate three-qubit OPB formal matrix. Since the $x_\alpha$ are formal symbols, the realizing states are a choice: taking $x_4=|+\rangle$ instead of $|0\rangle$ still realizes an orthonormal basis of $\mathbb{C}^8$, and the two variable pairs in column $3$ then have distinct direction pairs, as the standard form of Section~\ref{sec:not-mat} requires.
Its edge-colored complete multigraph consists of three color layers:
\begin{itemize}
\item color $1$ (variable $x_1$, column $1$): $K(P_1,N_1)$, where $P_1=\{1,2,3,4\}$ and $N_1=\{5,6,7,8\}$;
\item color $2$ (variable $x_2$, column $2$): $K(P_2,N_2)$, where $P_2=\{1,2,5,6\}$ and $N_2=\{3,4,7,8\}$;
\item color $3$ (variables $x_3,x_4$, column $3$):
$K(P_3,N_3)\sqcup K(P_4,N_4)$,
where $P_3=\{1,3\}$, $N_3=\{2,4\}$, $P_4=\{5,7\}$, and $N_4=\{6,8\}$.
\end{itemize}
Merging the three color layers yields the edge-colored complete multigraph on $\{1,\dots,8\}$; see Fig.~\ref{fig:example}.
\definecolor{layerblue}{HTML}{3F73E8}
\definecolor{layerorange}{HTML}{E58A00}
\definecolor{layergreen}{HTML}{24A865}
\tikzset{
  vertex/.style={
    circle, draw=black, fill=white, line width=.45pt,
    inner sep=.2pt, minimum size=4.0mm,
    font=\tiny\bfseries
  },
  layerone/.style={draw=layerblue, line width=.8pt, line cap=round},
  layertwo/.style={draw=layerorange, line width=.8pt, line cap=round},
  layerthree/.style={draw=layergreen, line width=.9pt, line cap=round},
  mergedone/.style={draw=layerblue, line width=.55pt, opacity=.72},
  mergedtwo/.style={draw=layerorange, dashed, line width=.65pt, opacity=.80},
  mergedthree/.style={draw=layergreen, dotted, line width=.95pt, opacity=.92},
  layerlabel/.style={font=\small\bfseries},
  graphlabel/.style={font=\scriptsize, text=black!70}
}
\newcommand{\LayerCoordinates}[1]{%
  \coordinate (#1p1) at (0,-1.35);%
  \coordinate (#1p2) at (-1.20,-.78);%
  \coordinate (#1p3) at (-1.62,.25);%
  \coordinate (#1p4) at (-1.05,1.25);%
  \coordinate (#1p5) at (0,1.55);%
  \coordinate (#1p6) at (1.05,1.25);%
  \coordinate (#1p7) at (1.62,.25);%
  \coordinate (#1p8) at (1.20,-.78);%
}
\newcommand{\LayerVertices}[1]{%
  \node[vertex] (#1v1) at (#1p1) {$v_1$};%
  \node[vertex] (#1v2) at (#1p2) {$v_2$};%
  \node[vertex] (#1v3) at (#1p3) {$v_3$};%
  \node[vertex] (#1v4) at (#1p4) {$v_4$};%
  \node[vertex] (#1v5) at (#1p5) {$v_5$};%
  \node[vertex] (#1v6) at (#1p6) {$v_6$};%
  \node[vertex] (#1v7) at (#1p7) {$v_7$};%
  \node[vertex] (#1v8) at (#1p8) {$v_8$};%
}
\newcommand{\LayerOneEdges}[2]{%
  \foreach \u in {#1v1,#1v2,#1v3,#1v4}{%
    \foreach \v in {#1v5,#1v6,#1v7,#1v8}{%
      \draw[#2] (\u)--(\v);%
    }%
  }%
}
\newcommand{\LayerTwoEdges}[2]{%
  \foreach \u in {#1v1,#1v2,#1v5,#1v6}{%
    \foreach \v in {#1v3,#1v4,#1v7,#1v8}{%
      \draw[#2] (\u)--(\v);%
    }%
  }%
}
\newcommand{\LayerThreeEdges}[2]{%
  \foreach \u/\v in {
    #1v1/#1v2,#1v1/#1v4,#1v3/#1v2,#1v3/#1v4,
    #1v5/#1v6,#1v5/#1v8,#1v7/#1v6,#1v7/#1v8%
  }{%
    \draw[#2] (\u)--(\v);%
  }%
}
\newcommand{\MergedLayerOne}[1]{%
  \foreach \u in {#1v1,#1v2,#1v3,#1v4}{%
    \foreach \v in {#1v5,#1v6,#1v7,#1v8}{%
      \draw[mergedone] (\u)--(\v);%
    }%
  }%
}
\newcommand{\MergedLayerTwo}[1]{%
  \foreach \u in {#1v1,#1v2,#1v5,#1v6}{%
    \foreach \v in {#1v3,#1v4,#1v7,#1v8}{%
      \draw[mergedtwo] (\u)--(\v);%
    }%
  }%
}
\newcommand{\MergedLayerThree}[1]{%
  \foreach \u/\v in {
    #1v1/#1v2,#1v1/#1v4,#1v3/#1v2,#1v3/#1v4,
    #1v5/#1v6,#1v5/#1v8,#1v7/#1v6,#1v7/#1v8%
  }{%
    \draw[mergedthree] (\u)--(\v);%
  }%
}
\begin{figure}[t]
\centering
\begin{tikzpicture}[x=1cm,y=1cm]
\begin{scope}[shift={(0,0)}]
  \LayerCoordinates{a}
  \LayerVertices{a}
  \LayerOneEdges{a}{layerone}
  \LayerVertices{a}
  \node[layerlabel, text=layerblue] at (0,1.95)
    {$K_1=K(P_1,N_1)$};
\end{scope}
\begin{scope}[shift={(4.00,0)}]
  \LayerCoordinates{b}
  \LayerVertices{b}
  \LayerTwoEdges{b}{layertwo}
  \node[layerlabel, text=layerorange] at (0,1.95)
    {$K_2=K(P_2,N_2)$};
\end{scope}
\begin{scope}[shift={(8.00,0)}]
  \LayerCoordinates{c}
  \LayerVertices{c}
  \LayerThreeEdges{c}{layerthree}
  \node[font=\scriptsize\bfseries, text=layergreen, align=center]
    at (0,1.95)
    {$K_3=K(P_3,N_3)\sqcup K(P_4,N_4)$};
\end{scope}
\begin{scope}[shift={(4.00,-4.00)}]
  \node[layerlabel] at (0,2.00) {$\Gamma(M)$};
  \LayerCoordinates{d}
  \LayerVertices{d}
  \MergedLayerOne{d}
  \MergedLayerTwo{d}
  \MergedLayerThree{d}
  \LayerVertices{d}
  \node[graphlabel, anchor=west] at (2.05,1.25) {$|V|=8$};
  \node[graphlabel, anchor=west] at (2.05,.90)
    {$\lambda_1=(2),\ \lambda_2=(2)$};
  \node[graphlabel, anchor=west] at (2.05,.55)
    {$\lambda_3=(1,1)$};
  \node[graphlabel, anchor=west] at (2.05,.20)
    {$\sum d_\alpha=2^{3-1}$};
\end{scope}
\end{tikzpicture}
\caption{The three color layers of the example and the merged edge-colored complete multigraph $\Gamma(M)$ on $V=\{1,\dots,8\}$. The three layers are drawn in blue (solid), orange (dashed), and green (dotted), respectively.}
\label{fig:example}
\end{figure}

The spectra of the three layers are $\lambda_1=(2)$, $\lambda_2=(2)$, and $\lambda_3=(1,1)$. For each layer,
$\sum_{\alpha\in A_j}d_\alpha=2^{n-1}$,
in accordance with Lemma~\ref{lem:balance}.
\subsection{The equivalence theorem}
\label{sec:equiv}
\begin{theorem}
\label{thm:equiv}
Let $M_1,M_2\in \mathcal O(n)$ be two OPB formal matrices. Then $M_1\sim M_2$ if and only if $\Gamma(M_1)\cong\Gamma(M_2)$.
\end{theorem}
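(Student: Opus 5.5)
The plan is to prove the two directions separately, with the forward implication being essentially the bookkeeping already recorded in Remark~\ref{rem:ops}, and the backward implication being a reconstruction argument. For ``$\Rightarrow$'', it suffices to check each of the four generating operations of Definition~\ref{def:equiv}. A row permutation $\sigma$ maps $P_\alpha(M_1),N_\alpha(M_1)$ to $\sigma(P_\alpha(M_1)),\sigma(N_\alpha(M_1))$. A column permutation $\tau$ changes $\chi(\alpha)$ to $\tau(\chi(\alpha))$. A renaming $\eta$ only relabels the index $\alpha$. A direction flip swaps $P_\alpha$ and $N_\alpha$, which leaves $K(P_\alpha,N_\alpha)$ unchanged. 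Composing these operations yields a pair $(\sigma,\tau)$ with $\mu_{M_1}(\{r,s\},j)=\mu_{M_2}(\{\sigma(r),\sigma(s)\},\tau(j))$, which is exactly the relation of Definition~\ref{def:gamma-equiv}.

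For ``$\Leftarrow$'', suppose $(\sigma,\tau)$ is an isomorphism $\Gamma(M_1)\cong\Gamma(M_2)$. First I apply to $M_1$ the row permutation $\sigma$ and the column permutation $\tau$. By the forward direction this changes $M_1$ only within its equivalence class, and afterwards $\Gamma(M_1)=\Gamma(M_2)$ holds literally, that is, $\Gamma_j(M_1)=\Gamma_j(M_2)$ as labeled simple graphs for every $j$. The remaining task is to show that two formal matrices with identical labeled layers differ only by renamings and direction flips. The key structural fact is that each $K(P_\alpha,N_\alpha)$ is a connected graph: it is complete bipartite with both sides nonempty by Lemma~\ref{lem:balance}. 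The blocks $K(P_\alpha,N_\alpha)$, $\alpha\in A_j$, have pairwise disjoint vertex sets, since the sets $P_\alpha\sqcup N_\alpha$ partition $V$, and they cover $V$. Hence they are exactly the connected components of $\Gamma_j$. So the connected components of $\Gamma_j(M_1)=\Gamma_j(M_2)$ give a bijection $\eta:A_j(M_1)\to A_j(M_2)$, matching variables whose vertex sets coincide, and this bijection is automatically compatible with the (now identity) column assignment.

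Next, a connected bipartite graph has a unique bipartition up to swapping the two sides. Therefore for matched $\alpha$ and $\eta(\alpha)$ we get $\{P_\alpha(M_1),N_\alpha(M_1)\}=\{P_{\eta(\alpha)}(M_2),N_{\eta(\alpha)}(M_2)\}$. Flipping $x_\alpha\leftrightarrow x_\alpha^\perp$ in $M_1$ exactly when the sides are swapped, and then renaming by $\eta$, makes $M_1$ agree with $M_2$ entrywise. Indeed, in column $j$ row $r$ lies in a unique component, and its side within that component determines the letter. This gives $M_1\sim M_2$.

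The main obstacle is the identification of the variable blocks with connected components. This needs every variable to occur, both $x_\alpha$ and $x_\alpha^\perp$ to occur (Lemma~\ref{lem:balance}, which uses the standard form), and distinct variables in a column to occupy disjoint row sets. The last point follows from the definition, because each entry is a single letter. I would state this as a short preliminary claim before the main argument, so that the reconstruction step becomes a direct citation.
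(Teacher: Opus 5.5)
Your proposal is correct and follows essentially the same route as the paper. The forward direction is bookkeeping of how each operation acts on the blocks $K(P_\alpha,N_\alpha)$; the converse recovers the renaming from the connected components of each color layer and the flips from the uniqueness of the bipartition of a connected bipartite graph, with the only cosmetic difference being that you first apply $\sigma,\tau$ to $M_1$ so that $\Gamma(M_1)=\Gamma(M_2)$ literally, whereas the paper tracks $\sigma$ carrying the components of $\Gamma_j(M_1)$ onto those of $\Gamma_{\tau(j)}(M_2)$ directly.
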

\begin{proof}
$(\Rightarrow)$ Let $\eta$ be the variable renaming, $\tau$ the column permutation, and $\varepsilon_\alpha\in\{0,1\}$ the flip indicators, with
$\chi_{M_2}(\eta(\alpha))=\tau(\chi_{M_1}(\alpha))$.
If $\varepsilon_\alpha=0$, then
\[
\sigma(P_\alpha(M_1))=P_{\eta(\alpha)}(M_2),
\qquad
\sigma(N_\alpha(M_1))=N_{\eta(\alpha)}(M_2);
\]
if $\varepsilon_\alpha=1$, the two sides are interchanged. An edge $\{r,s\}$ of color $j=\chi_{M_1}(\alpha)$ in $\Gamma(M_1)$ satisfies $r\in P_\alpha(M_1)$ and $s\in N_\alpha(M_1)$. In either case, $\sigma(r)$ and $\sigma(s)$ lie on opposite sides of
$K(P_{\eta(\alpha)}(M_2),N_{\eta(\alpha)}(M_2))$,
hence $\{\sigma(r),\sigma(s)\}$ is an edge of color
$\chi_{M_2}(\eta(\alpha))=\tau(j)$.
Since all mappings are bijections, the same holds for the inverse, and therefore
$$\mu_{M_1}(\{r,s\},j) = \mu_{M_2}(\{\sigma(r),\sigma(s)\},\tau(j))$$
for all pairs $\{r,s\}$ and colors $j$. Thus $\Gamma(M_1)\cong\Gamma(M_2)$.

$(\Leftarrow)$ Conversely, assume $\Gamma(M_1)\cong\Gamma(M_2)$ via vertex and color permutations $\sigma,\tau$. The row and column permutations are $\sigma$ and $\tau$; it remains to recover the renaming and the flips.
Fix $M$ and $j$. The blocks
$K(P_\alpha(M),N_\alpha(M))$, $\alpha\in A_j(M)$,
have pairwise disjoint vertex sets, and each is connected. Hence the connected components of $\Gamma_j(M)$ are exactly these blocks: the layer loses the variable names but not the variable blocks.
Since $\sigma$ maps $\Gamma_j(M_1)$ isomorphically onto $\Gamma_{\tau(j)}(M_2)$, it maps components to components. For each $\alpha\in A_j(M_1)$ there is thus a unique $\eta(\alpha)\in A_{\tau(j)}(M_2)$ with
\[
\sigma(P_\alpha(M_1)\sqcup N_\alpha(M_1))
=
P_{\eta(\alpha)}(M_2)\sqcup N_{\eta(\alpha)}(M_2).
\]
As $\sigma$ carries the components of color $j$ bijectively to those of color $\tau(j)$, $\eta$ is a bijection of the variable sets, and
$\chi_{M_2}(\eta(\alpha)) = \tau(\chi_{M_1}(\alpha))$
holds automatically.
It remains to recover the directions. Fix $\alpha$, and put
\[
\begin{gathered}
H=K(P_\alpha(M_1),N_\alpha(M_1)),\\
H'=K(P_{\eta(\alpha)}(M_2),N_{\eta(\alpha)}(M_2)).
\end{gathered}
\]
The bipartition of a connected bipartite graph is unique up to swapping the two sides. Hence $\sigma(P_\alpha(M_1))$ is either $P_{\eta(\alpha)}(M_2)$ or $N_{\eta(\alpha)}(M_2)$; the former corresponds to no flip, and the latter to the flip
$x_{\eta(\alpha)}\leftrightarrow x_{\eta(\alpha)}^\perp$.
Finally, take any position $(r,j)$ of $M_1$. The supports of column $j$ partition $V$, so there is a unique $\alpha\in A_j(M_1)$ with
$r\in P_\alpha(M_1)\sqcup N_\alpha(M_1)$.
Under $\sigma,\tau$, this position maps to $(\sigma(r),\tau(j))$, and the preceding component correspondence and flip choice reproduce exactly the entry of $M_2$ at that position. Since $(r,j)$ is arbitrary, $M_1\sim M_2$.
\end{proof}

\section{Local reducibility and LOCC distinguishability}
\label{sec:local-locc}

The color layers also encode operational properties of a multiqubit OPB. Existing approaches express distinguishability through local reducibility and admissible measurement sequences \cite{de-rinaldis,chen-jiang}, through structural representations in fixed low-dimensional systems \cite{feng-shil}, or, for multiqubit unentangled orthonormal bases, through recursive decompositions after the first local measurement \cite{lebl}. The multigraph formulation consolidates the multiqubit criterion into two elementary operations on $\Gamma(M)$: testing connectivity of a color layer and repeating that test on the induced layers created by a split. The same finite combinatorial object therefore detects local irreducibility, certifies finite-round LOCC distinguishability, and, when successful, records the adaptive projective protocol. Because these criteria depend only on $\Gamma(M)$, they are invariants of the formal equivalence class by Theorem~\ref{thm:equiv}.

Fix $M\in \mathcal O(n)$ and a standard-form realization
\[
\mathcal B(M)=\left\{
|\psi_r\rangle=\bigotimes_{k=1}^{n}|u_{rk}\rangle:r\in V
\right\}.
\]
For a fixed party $j$, write
$|\psi_r\rangle=|u_{rj}\rangle\otimes|\widehat\psi_r^{(j)}\rangle$.

\begin{definition}[Local reducibility]
\label{def:local-reducibility}
The basis $\mathcal B(M)$ is \emph{$j$-reducible} if $V$ has a partition into two nonempty sets $V_0$ and $V_1$ such that
\[
\operatorname{span}\{|u_{rj}\rangle:r\in V_0\}
\perp
\operatorname{span}\{|u_{sj}\rangle:s\in V_1\}.
\]
It is \emph{locally reducible} if it is $j$-reducible for some $j$, and \emph{locally irreducible} otherwise.

A positive operator $E$ on party $j$ is \emph{orthogonality preserving} for $\mathcal B(M)$ if
\[
\langle\psi_r|(E_j\otimes I_{\bar j})|\psi_s\rangle=0
\qquad(r\ne s).
\]
A local measurement is orthogonality preserving if each of its effects has this property.  It is \emph{trivial} if all its effects are scalar multiples of the identity.
\end{definition}

The first definition agrees with the usual $A_j$-reducibility of an OPB \cite{chen-jiang}.  For complete qubit OPBs, it is also equivalent to the existence of a nontrivial orthogonality-preserving measurement, as the next lemma and theorem show.

\begin{lemma}[Local eigenvector lemma]
\label{lem:local-eigenvector}
If $E$ is an orthogonality-preserving effect on party $j$, then every local factor $|u_{rj}\rangle$ is an eigenvector of $E$.
\end{lemma}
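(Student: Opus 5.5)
The plan is to use orthogonality preservation on a well-chosen pair of rows that share the variable pair of $|u_{rj}\rangle$ but carry opposite directions of it, and to reduce the qubit case to a one-line linear-algebra fact in $\mathbb{C}^2$. Fix a row $r$ and let $\alpha\in A_j$ be the unique variable with $r\in P_\alpha(M)\sqcup N_\alpha(M)$. By symmetry under the direction flip (Definition~\ref{def:equiv}(iv)) we may assume $r\in P_\alpha(M)$, so $|u_{rj}\rangle=|x_\alpha\rangle$. For any $s\ne r$, orthogonality preservation gives
\[
0=\langle\psi_r|(E_j\otimes I_{\bar j})|\psi_s\rangle=\langle u_{rj}|E|u_{sj}\rangle\,\langle\widehat\psi^{(j)}_r|\widehat\psi^{(j)}_s\rangle .
\]
So whenever the reduced states $\widehat\psi^{(j)}_r,\widehat\psi^{(j)}_s$ are \emph{not} orthogonal, we get $\langle u_{rj}|E|u_{sj}\rangle=0$.

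\textbf{Key step: find $s\in N_\alpha(M)$ with $\langle\widehat\psi^{(j)}_r|\widehat\psi^{(j)}_s\rangle\neq0$.} I will reuse the proof of Lemma~\ref{lem:balance}. The reduced states $\{\widehat\psi^{(j)}_p:p\in P_\alpha\}$ are unit vectors and pairwise orthogonal, because the common factor $x_\alpha$ can be deleted. Hence they form an orthonormal basis of a space $U$ of dimension $d_\alpha$. Likewise $\{\widehat\psi^{(j)}_q:q\in N_\alpha\}$ is an orthonormal basis of a space $T$ of dimension $d_\alpha$. That proof established $a=b=\dim(U+T)$. Since $U,T\subseteq U+T$ all have the same dimension, it follows that $U=T$. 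Therefore $\widehat\psi^{(j)}_r\in U=T$ is a nonzero vector. Expanding it in the orthonormal basis of $T$ shows that some $s\in N_\alpha$ has nonzero overlap with it. I expect this step, extracting $U=T$ from the counting in Lemma~\ref{lem:balance}, to be the main point. It is also where completeness of the basis and the standard-form convention enter, both through that lemma.

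\textbf{Conclusion in $\mathbb{C}^2$.} With such an $s$, we have $|u_{sj}\rangle=|x_\alpha^\perp\rangle$, so $\langle x_\alpha|E|x_\alpha^\perp\rangle=0$. Since $E$ is positive, hence Hermitian, this also gives $\langle x_\alpha^\perp|E|x_\alpha\rangle=0$. In the qubit space, $\{x_\alpha,x_\alpha^\perp\}$ is an orthonormal basis, so $E|x_\alpha\rangle$ has zero component along $x_\alpha^\perp$ and is therefore a multiple of $|x_\alpha\rangle$. The same argument with the roles of $P_\alpha$ and $N_\alpha$ exchanged shows that $|x_\alpha^\perp\rangle$ is an eigenvector too. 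Since $r$ was arbitrary, every local factor $|u_{rj}\rangle$ is an eigenvector of $E$.
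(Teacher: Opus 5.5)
Your argument is correct, but it takes a longer route than the paper's. The paper never selects a partner row. It observes that $(E_j\otimes I_{\bar j})|\psi_r\rangle=(E|u_{rj}\rangle)\otimes|\widehat\psi_r^{(j)}\rangle$ is orthogonal to every $|\psi_s\rangle$ with $s\ne r$, by orthogonality preservation applied to the pairs $(s,r)$. By completeness of $\mathcal B(M)$ this vector is therefore a multiple of $|\psi_r\rangle$, and cancelling the nonzero factor $|\widehat\psi_r^{(j)}\rangle$ gives $E|u_{rj}\rangle=\lambda_r|u_{rj}\rangle$. That argument uses only completeness. It needs nothing about standard form, balancedness, the local dimension being $2$, or Hermiticity of $E$.

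You use completeness only indirectly, through the internals of the proof of Lemma~\ref{lem:balance}. Each of your steps holds. The extraction of $U=T$ from $a=b=\dim(U+T)$ is valid. It yields some $s\in N_\alpha(M)$ with $\langle\widehat\psi_r^{(j)}|\widehat\psi_s^{(j)}\rangle\ne0$. The resulting vanishing of $\langle x_\alpha|E|x_\alpha^\perp\rangle$, together with $E=E^\dagger$, does diagonalize $E$ in $\{x_\alpha,x_\alpha^\perp\}$.

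The cost is twofold. Your proof inherits the standard-form hypothesis through the balance argument. It is also tied to qubits, since a zero off-diagonal entry forces an eigenvector only in a two-dimensional space. The paper's argument holds for any complete orthogonal product basis in any local dimension.

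Your route does produce a structural by-product the paper never states. The complementary factors of the $x_\alpha$-rows and of the $x_\alpha^\perp$-rows span the same $d_\alpha$-dimensional subspace. Moreover, a single cross pair with nonorthogonal complements already forces $E$ to be diagonal in the local basis of that variable pair. Because you rely on an intermediate step of another proof rather than on its statement, it would be cleaner to record $U=T$ explicitly as a consequence of Lemma~\ref{lem:balance}.
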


\begin{proof}
Fix $r\in V$.  Orthogonality preservation gives, for every $s\ne r$,
\[
\left\langle\psi_s\middle|
(E|u_{rj}\rangle)\otimes|\widehat\psi_r^{(j)}\rangle
\right\rangle=0.
\]
The displayed vector is therefore orthogonal to all members of the complete orthonormal basis $\mathcal B(M)$ except possibly $|\psi_r\rangle$.  It must be a scalar multiple of $|\psi_r\rangle$.  Since $|\widehat\psi_r^{(j)}\rangle$ is nonzero, there is a scalar $\lambda_r$ such that
\[
E|u_{rj}\rangle=\lambda_r|u_{rj}\rangle.
\]
\end{proof}

\begin{theorem}[Graph criterion for local reducibility]
\label{thm:local-reducibility}
Fix a party $j$ and let $c_j=c(\Gamma_j(M))=|A_j|$.  The following statements are equivalent:
\begin{enumerate}
\item $\mathcal B(M)$ is $j$-reducible;
\item party $j$ admits a nontrivial orthogonality-preserving local measurement;
\item $c_j=1$;
\item the color layer $\Gamma_j(M)$ is connected.
\end{enumerate}
If these conditions fail, every orthogonality-preserving effect on party $j$ is a scalar multiple of the identity.
\end{theorem}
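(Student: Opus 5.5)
The plan is to prove the cycle (3)$\Leftrightarrow$(4), then (3)$\Rightarrow$(1)$\Rightarrow$(2), and finally the contrapositive of (2)$\Rightarrow$(3), which is the same as the final assertion. For (3)$\Leftrightarrow$(4): as in the proof of Theorem~\ref{thm:equiv}, the connected components of $\Gamma_j(M)$ are exactly the blocks $K(P_\alpha,N_\alpha)$ with $\alpha\in A_j$. Each block is nonempty and connected by Lemma~\ref{lem:balance}, and the blocks have disjoint vertex sets that together cover $V$. Hence $c(\Gamma_j(M))=|A_j|$, and the layer is connected iff $|A_j|=1$.

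For (3)$\Rightarrow$(1): if $A_j=\{\alpha\}$, take $V_0=P_\alpha$ and $V_1=N_\alpha$. Both are nonempty by Lemma~\ref{lem:balance}. The local factors on party $j$ are $u_\alpha$ on $V_0$ and $u_\alpha^\perp$ on $V_1$, so the two spans are orthogonal. For (1)$\Rightarrow$(2): given a $j$-reducible partition, let $S_0$ be the span of the factors indexed by $V_0$. Take the measurement $\{\Pi,I-\Pi\}$, where $\Pi$ is the projector onto $S_0$. Every $u_{rj}$ with $r\in V_0$ lies in $S_0$, and every $u_{sj}$ with $s\in V_1$ lies in $S_0^\perp$. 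So each $u_{rj}$ is an eigenvector of $\Pi$, and then
\[
\langle\psi_r|(\Pi\otimes I)|\psi_s\rangle=\lambda_s\langle\psi_r|\psi_s\rangle=0 .
\]
The measurement is nontrivial because $S_0\neq0$ (as $V_0\neq\varnothing$) and $S_0\neq\mathbb C^2$ (as $V_1\neq\varnothing$ forces $S_0^\perp\neq0$). Hence $\Pi$ is a rank-one projector, not a multiple of $I$.

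For the contrapositive of (2)$\Rightarrow$(3), suppose $|A_j|\ge2$ and pick distinct $\alpha,\beta\in A_j$. Let $E$ be any orthogonality-preserving effect on party $j$. By Lemma~\ref{lem:local-eigenvector}, each of $u_\alpha,u_\alpha^\perp,u_\beta,u_\beta^\perp$ is an eigenvector of $E$; each occurs as some $u_{rj}$ by Lemma~\ref{lem:balance}. By the standard-form convention, $u_\beta$ is neither parallel nor orthogonal to $u_\alpha$. So $u_\beta=a u_\alpha+b u_\alpha^\perp$ with $a,b\neq0$. Write $Eu_\alpha=\lambda u_\alpha$, $Eu_\alpha^\perp=\lambda' u_\alpha^\perp$, and $Eu_\beta=\mu u_\beta$. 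Comparing coefficients gives $\lambda a=\mu a$ and $\lambda' b=\mu b$, hence $\lambda=\lambda'=\mu$. Since $\{u_\alpha,u_\alpha^\perp\}$ is a basis of $\mathbb C^2$, it follows that $E=\lambda I$. Every effect of an orthogonality-preserving measurement is therefore scalar, so the measurement is trivial. This gives $\neg$(3)$\Rightarrow\neg$(2) and also the final sentence of the theorem.

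The main obstacle is this last step. It is the only place that uses both completeness of the basis (through Lemma~\ref{lem:local-eigenvector}) and the standard-form hypothesis. Without standard form, two variable pairs in the same column could share a direction pair, and the layer could be disconnected while the basis is still $j$-reducible. So I will state explicitly that the argument needs $\mathcal B(M)$ to be a standard-form realization.
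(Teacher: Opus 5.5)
Your proof is correct and essentially the paper's. It uses the same ingredients: the components of $\Gamma_j(M)$ are the blocks $K(P_\alpha,N_\alpha)$; $P_\alpha\sqcup N_\alpha$ gives the reducing partition and a nontrivial projective measurement when $c_j=1$; and the local eigenvector lemma together with standard form forces every orthogonality-preserving effect to be scalar when $c_j\ge2$. The paper differs only in minor ways: it proves (1)$\Rightarrow$(3) directly (two nonzero mutually orthogonal spans in $\mathbb C^2$ are lines, so standard form gives $c_j=1$) instead of routing through your projector onto $S_0$, and it gets $E=\lambda I$ from the fact that nonorthogonal eigenvectors of the Hermitian $E$ share an eigenvalue, using only $u_\alpha,u_\beta$ rather than your coefficient comparison.
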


\begin{proof}
If $c_j=1$, all local factors in column $j$ belong to one orthonormal pair $\{u_\alpha,u_\alpha^\perp\}$.  The partition $P_\alpha(M)\sqcup N_\alpha(M)$ proves $j$-reducibility, and the projective measurement
\[
\left\{
|u_\alpha\rangle\langle u_\alpha|,
|u_\alpha^\perp\rangle\langle u_\alpha^\perp|
\right\}
\]
is nontrivial and orthogonality preserving.

Conversely, a $j$-reducing partition consists, in the two-dimensional local space, of two one-dimensional mutually orthogonal spans.  Hence every local factor in column $j$ belongs to the same orthonormal pair.  The standard-form convention then gives $c_j=1$.

Suppose now that $c_j\ge2$ and choose distinct $\alpha,\beta\in A_j$.  By standard form, $u_\alpha$ and $u_\beta$ are neither parallel nor orthogonal.  Lemma~\ref{lem:local-eigenvector} makes both vectors eigenvectors of any orthogonality-preserving effect $E$.  Nonorthogonal eigenvectors of a Hermitian operator have the same eigenvalue; since these two vectors are linearly independent, $E$ is a scalar multiple of the identity.  Thus no nontrivial orthogonality-preserving local measurement exists.

The equivalence of the last two statements follows from
\[
\Gamma_j(M)=
\bigsqcup_{\alpha\in A_j}K(P_\alpha(M),N_\alpha(M)),
\]
whose components are nonempty by Lemma~\ref{lem:balance}.
\end{proof}

\begin{corollary}[Local irreducibility]
\label{cor:local-irreducibility}
For a standard-form multiqubit OPB,
\[
\begin{aligned}
\mathcal B(M)\text{ is locally irreducible}
&\quad\Longleftrightarrow\quad
c(\Gamma_j(M))\ge2\ \text{for every }j\\
&\quad\Longleftrightarrow\quad
\Gamma_j(M)\text{ is disconnected for every }j.
\end{aligned}
\]
In this case $\mathcal B(M)$ cannot be perfectly distinguished by any finite-round LOCC protocol.
\end{corollary}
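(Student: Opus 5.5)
The first two equivalences follow directly from Theorem~\ref{thm:local-reducibility}. By Definition~\ref{def:local-reducibility}, $\mathcal B(M)$ is locally irreducible exactly when it fails to be $j$-reducible for every party $j$. Theorem~\ref{thm:local-reducibility}, statements~(1), (3) and~(4), says that for each fixed $j$, $j$-reducibility is equivalent to $c_j=1$ and to connectedness of $\Gamma_j(M)$. Negating these for every $j$ gives $c(\Gamma_j(M))\neq 1$ for all $j$. Since Lemma~\ref{lem:balance} makes every block $K(P_\alpha,N_\alpha)$ nonempty and column $j$ always contains at least one variable pair, $c_j\ge1$, so $c_j\neq1$ means $c_j\ge2$. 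Equivalently, each $\Gamma_j(M)$ is disconnected.

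For the final sentence, I would argue that any finite-round LOCC protocol that perfectly distinguishes an orthonormal basis must contain a nontrivial orthogonality-preserving local measurement, and then contradict the last clause of Theorem~\ref{thm:local-reducibility}. Fix a perfectly discriminating finite-round protocol and look at its tree of measurement rounds. Consider the first round in which some party performs a measurement not all of whose effects are proportional to the identity. Such a round exists, because otherwise the overall POVM elements would be multiples of $I$ and could not distinguish $N=2^n\ge2$ states. All earlier rounds are trivial, so they leave every state $|\psi_r\rangle$ unchanged up to normalization, and the basis presented to this round is still $\mathcal B(M)$.

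The key step, which I expect to be the main obstacle, is to show that every effect $E$ of this first nontrivial measurement, performed by some party $j$, must be orthogonality preserving. The standard argument runs as follows. After outcome $E$, the post-measurement states are $(\sqrt{E}\otimes I)|\psi_r\rangle$, up to an unitary on the measuring party. For the remainder of the protocol to succeed with certainty, these states (for those $r$ with nonzero probability) must be pairwise orthogonal, since nonorthogonal states cannot be perfectly discriminated by any later operations. This gives $\langle\psi_r|E\otimes I|\psi_s\rangle=0$ for $r\ne s$ whenever both $r$ and $s$ survive outcome $E$. If instead $r$ has zero probability, then $(\sqrt{E}\otimes I)|\psi_r\rangle=0$, and the matrix element vanishes trivially.

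Hence $E$ is orthogonality preserving, and by Theorem~\ref{thm:local-reducibility} it is a scalar multiple of the identity because $c_j\ge2$. This holds for every effect of the measurement, so the measurement is trivial, contradicting the choice of round. Therefore no finite-round LOCC protocol perfectly distinguishes $\mathcal B(M)$.
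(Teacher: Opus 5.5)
Your proposal is correct and follows essentially the same route as the paper. The equivalences come from applying Theorem~\ref{thm:local-reducibility} party by party, and the LOCC claim comes from showing that the first nontrivial measurement in a perfect protocol must be orthogonality preserving, which that theorem's last clause rules out when every $c_j\ge2$. One small correction: earlier trivial rounds may apply outcome-dependent local unitaries (Kraus operators $\sqrt{c}\,U$), so the states are unchanged only up to a local unitary, not merely up to normalization. As the paper notes, these unitaries can be absorbed into the later effect (replace $E$ by $U^\dagger E U$), so your argument is unaffected.
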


\begin{proof}
The equivalences follow by applying Theorem~\ref{thm:local-reducibility} to every party.  In a perfect finite-round discrimination protocol, the postmeasurement states at every nonzero outcome must remain pairwise orthogonal.  Measurements whose effects are scalar multiples of the identity reveal no information, and their conditional local unitaries can be absorbed into later operations.  The first remaining measurement is therefore a nontrivial orthogonality-preserving measurement by one of the parties, contradicting local irreducibility.
\end{proof}

Local reducibility is only a condition on the first informative measurement.  A basis may admit such a measurement while one of the resulting branches admits no continuation.  The complete LOCC criterion consequently has to be recursive.

For $U\subseteq V$, let $\Gamma_j(M)[U]$ denote the subgraph of the $j$th layer induced by $U$.  A color $j$ is \emph{splittable on $U$} if $\Gamma_j(M)[U]$ is connected.  When $|U|>1$, this means that there is a unique $\alpha=\alpha(j,U)\in A_j$ such that
\[
U\subseteq P_\alpha(M)\sqcup N_\alpha(M),
\]
and both intersections with $U$ are nonempty.

\begin{definition}[Color-splitting tree]
\label{def:splitting-tree}
A \emph{color-splitting tree} for $M$ is a rooted binary tree whose nodes are labeled by pairs $(U,J)$ with $U\subseteq V$ and $J\subseteq\{1,\ldots,n\}$.  The root is $(V,\{1,\ldots,n\})$.  A node with $|U|=1$ is a leaf.  Every nonleaf node chooses a color $j\in J$ that is splittable on $U$ and has the two children
\[
\left(U\cap P_{\alpha(j,U)}(M),J\setminus\{j\}\right),
\qquad
\left(U\cap N_{\alpha(j,U)}(M),J\setminus\{j\}\right).
\]
The tree is \emph{complete} if all its leaves are singletons.
\end{definition}

Equivalently, define a Boolean recursion $\mathsf D(U,J)$.  Set $\mathsf D(U,J)=\True$ when $|U|=1$.  For $|U|>1$, set
\begin{equation}
\label{eq:dist-recursion}
\mathsf D(U,J)=
\bigvee_{\substack{j\in J:\,\Gamma_j(M)[U]\text{ connected}}}
\left(
\mathsf D(U\cap P_{\alpha(j,U)},J\setminus\{j\})
\wedge
\mathsf D(U\cap N_{\alpha(j,U)},J\setminus\{j\})
\right),
\end{equation}
where an empty disjunction is $\False$.  A complete color-splitting tree exists exactly when
\[
\mathsf D(V,\{1,\ldots,n\})=\True.
\]

\begin{lemma}[Subproblem invariant]
\label{lem:subproblem-invariant}
At every node $(U,J)$ generated from the root by the splitting rule, the factors already removed are fixed on that branch, and the rows indexed by $U$, restricted to the parties in $J$, form a complete OPB of $(\mathbb C^2)^{\otimes |J|}$.  In particular, $|U|=2^{|J|}$.
\end{lemma}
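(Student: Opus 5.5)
Induction on the depth of the node in the tree generated from the root by the splitting rule. The claim at the root $(V,\{1,\dots,n\})$ is immediate: no factor has been removed, and the rows of $M$ form a complete OPB of $(\mathbb C^2)^{\otimes n}$ with $|V|=2^n$. It remains to show the invariant is preserved when passing from a node $(U,J)$ to either of its children.

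Suppose the invariant holds at $(U,J)$ with $|U|>1$, and a color $j\in J$ splittable on $U$ is chosen. Since $\Gamma_j(M)[U]$ is connected, all entries of column $j$ in the rows of $U$ lie in the single pair $\{x_\alpha,x_\alpha^\perp\}$ with $\alpha=\alpha(j,U)$. Moreover, both $U_0=U\cap P_\alpha(M)$ and $U_1=U\cap N_\alpha(M)$ are nonempty. Consider the child $U_0$. On that branch the $j$-th factor is fixed to $u_\alpha$, and the factors removed earlier were already fixed by hypothesis. So every row $r\in U_0$, restricted to $J$, has the form $u_\alpha\otimes\phi_r$, where $\phi_r$ is its restriction to $J\setminus\{j\}$.

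Orthonormality of the $\phi_r$ follows because the rows restricted to $J$ are orthonormal and $\langle u_\alpha,u_\alpha\rangle=1$. Hence $\{\phi_r:r\in U_0\}$ is orthonormal in $(\mathbb C^2)^{\otimes(|J|-1)}$, and likewise $\{\phi'_s:s\in U_1\}$. Completeness follows from a dimension count, as in the proof of Lemma~\ref{lem:balance}. The restricted rows span $(\mathbb C^2)^{\otimes|J|}$, which decomposes as $(u_\alpha\otimes W)\oplus(u_\alpha^\perp\otimes W)$ with $W=(\mathbb C^2)^{\otimes(|J|-1)}$. The $U_0$-rows lie in the first summand and the $U_1$-rows in the second, so each family must span its own summand. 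Therefore $|U_0|=|U_1|=2^{|J|-1}$, and each family is a complete OPB of $W$. The cardinality statement $|U|=2^{|J|}$ follows at each node from completeness.

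A small check is needed that the child restrictions are genuine OPB formal matrices in the sense of Definition~\ref{def:fm}, and that the standard-form condition is inherited. Both hold because they are realized by the same vectors in a subset of rows and columns. Condition (2) for the child is the one point requiring care. Two rows $r,s\in U_0$ are orthogonal in $M$ via some column, which cannot be $j$ (both carry $x_\alpha$), nor a removed column, whose factor is fixed along the branch and hence equal for $r$ and $s$. Hence they are orthogonal via a column of $J\setminus\{j\}$. The main obstacle is stating precisely what "fixed" means for removed factors. I would include in the inductive hypothesis that, for each removed column, all rows of $U$ share the same letter there, which the splitting rule guarantees.
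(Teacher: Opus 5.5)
Your proof is correct and follows essentially the same route as the paper: induction from the root, then the observation that a splittable color forces a single variable pair $\{x_\alpha,x_\alpha^\perp\}$ at party $j$, then a dimension count. The paper phrases that count as each side having at most $2^{|J|-1}$ orthonormal states with sizes summing to $2^{|J|}$; your spanning argument in $(u_\alpha\otimes W)\oplus(u_\alpha^\perp\otimes W)$ is the same count. Your added inductive hypothesis (all rows of $U$ share the same letter in each removed column) and your check of condition~(2) for the restricted rows make precise what the paper leaves implicit.
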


\begin{proof}
The assertion holds at the root.  Suppose it holds at $(U,J)$ and color $j$ is used for the next split.  All states in $U$ use the pair $\{u_\alpha,u_\alpha^\perp\}$ at party $j$.  The states on either side of the split remain pairwise orthogonal after deleting that fixed local factor, so each side contains at most $2^{|J|-1}$ states.  Their sizes sum to $|U|=2^{|J|}$; hence both sides have size $2^{|J|-1}$ and each restricted orthogonal set is complete.  This proves the assertion for both children.
\end{proof}

\begin{theorem}[Recursive graph criterion for LOCC]
\label{thm:locc-tree}
For a standard-form multiqubit OPB $\mathcal B(M)$, the following statements are equivalent:
\begin{enumerate}
\item $\mathcal B(M)$ is perfectly distinguishable by a finite-round LOCC protocol;
\item $\Gamma(M)$ admits a complete color-splitting tree;
\item $\mathsf D(V,\{1,\ldots,n\})=\True$ in \eqref{eq:dist-recursion}.
\end{enumerate}
Whenever these conditions hold, the splitting tree specifies a protocol using only local projective measurements.
\end{theorem}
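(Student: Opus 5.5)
The equivalence (2)$\Leftrightarrow$(3) is essentially definitional, and I would dispose of it first. I will prove, by induction on $|U|$, that for every node $(U,J)$ reachable from the root, $\mathsf D(U,J)=\True$ exactly when a complete color-splitting subtree rooted at $(U,J)$ exists. When $|U|=1$ both sides are true. When $|U|>1$, the recursion \eqref{eq:dist-recursion} is a disjunction over exactly the colors that Definition~\ref{def:splitting-tree} allows at that node. The two children are precisely the two arguments of the conjunction. So the inductive step is immediate.

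For (2)$\Rightarrow$(1), I will read the tree as a protocol. At a node $(U,J)$ with chosen color $j$, party $j$ measures $\{|u_\alpha\rangle\langle u_\alpha|,|u_\alpha^\perp\rangle\langle u_\alpha^\perp|\}$ with $\alpha=\alpha(j,U)$, broadcasts the outcome, and the protocol moves to the corresponding child. Correctness rests on Lemma~\ref{lem:subproblem-invariant}. Every state in $U$ has party-$j$ factor $u_\alpha$ or $u_\alpha^\perp$, so the outcome is deterministic given the true state. The post-measurement state is unchanged, and $U$ is narrowed correctly. Each color is used at most once along a branch, so the depth is at most $n$ and the protocol has finitely many rounds. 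A leaf is a singleton, so it identifies the state. This also gives the final sentence of the theorem.

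The main work is (1)$\Rightarrow$(2), which I will prove by induction on $|J|$. The claim is that if the complete OPB indexed by $U$ on parties $J$ (as in Lemma~\ref{lem:subproblem-invariant}) is perfectly distinguishable by finite-round LOCC, then $\mathsf D(U,J)=\True$.
\begin{enumerate}
\item As in the proof of Corollary~\ref{cor:local-irreducibility}, discard the initial trivial measurements. The first informative measurement is then a nontrivial orthogonality-preserving one by some party $j\in J$, applied to the restricted complete OPB.
\item By Theorem~\ref{thm:local-reducibility}, applied to this restricted basis, the induced layer $\Gamma_j(M)[U]$ is connected. Hence $j$ is splittable on $U$, with variable $\alpha=\alpha(j,U)$.
\item By Lemma~\ref{lem:local-eigenvector}, every effect $E$ of that measurement is diagonal in $\{u_\alpha,u_\alpha^\perp\}$, say $E=a|u_\alpha\rangle\langle u_\alpha|+b|u_\alpha^\perp\rangle\langle u_\alpha^\perp|$.
\item Since $\sum E=I$, some outcome has $a>0$. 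In that branch the states of $U\cap N_\alpha$ with $b=0$ are eliminated, or all survive when $b>0$. Either way, the conditional protocol perfectly distinguishes the states of $U\cap P_\alpha$, whose party-$j$ factor is fixed.
\item After discarding party $j$, these states form a complete OPB on $J\setminus\{j\}$ by Lemma~\ref{lem:subproblem-invariant}. The remainder of the protocol, with party $j$'s later actions simulated locally on a known fixed factor, is a finite-round LOCC protocol distinguishing them. Induction then gives $\mathsf D(U\cap P_\alpha,J\setminus\{j\})=\True$. The $N_\alpha$ side is handled symmetrically, so $\mathsf D(U,J)=\True$.
\end{enumerate}

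I expect the main obstacle to be the restriction step: a protocol that distinguishes a set also distinguishes any subset, after the known factor of party $j$ has been removed. Two cases need care. An effect with $a,b>0$ leaves a mixture of both sides, but it still distinguishes each side as a subset. Party $j$ may also act again later in the protocol. Since its factor is now known to be $u_\alpha$, those actions can be replaced by classical randomness, so they cause no difficulty. One further check is that Theorem~\ref{thm:local-reducibility} applies to the restricted basis. That requires the restricted basis to inherit standard form, which it does because it uses a subset of the same variable pairs.
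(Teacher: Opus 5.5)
Your proposal is correct and follows the paper's proof essentially step for step. You get (2)$\Leftrightarrow$(3) by unwinding the definition and (2)$\Rightarrow$(1) by reading the tree as an adaptive projective protocol. For (1)$\Rightarrow$(2) you induct on $|J|$: the first informative measurement, the local eigenvector lemma and the standard-form argument on the complete subproblem force $\Gamma_j(M)[U]$ to be connected, and the protocol is then restricted to the two children. Your explicit restriction step chooses an outcome with $a>0$ (resp. $b>0$) and simulates party $j$'s later actions by classical randomness, so the child is a genuine $|J\setminus\{j\}|$-party OPB. This spells out what the paper compresses into ``the original discrimination protocol, restricted to either promised subset of states''.
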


\begin{proof}
Assume first that a complete color-splitting tree is given.  At an internal node $(U,J)$ labeled by color $j$ and variable $\alpha(j,U)$, party $j$ measures in the basis $\{u_\alpha,u_\alpha^\perp\}$.  The two outcomes restrict the candidates to the two children of the node.  Repeating this operation along the tree reaches a singleton at every leaf, so the accumulated outcomes identify the initial state.  The depth is at most $n$, and the protocol is a finite-round LOCC protocol.

Conversely, suppose that a finite-round LOCC protocol perfectly distinguishes the states at a node $(U,J)$.  By Lemma~\ref{lem:subproblem-invariant}, they constitute a complete OPB on the active parties.  Remove all initial measurements whose effects are scalar multiples of the identity, absorbing their conditional local unitaries into later operations, and let party $j$ perform the first remaining measurement.  Every effect of this measurement must preserve orthogonality: two nonorthogonal conditional pure states cannot subsequently be discriminated with certainty.  Applying Lemma~\ref{lem:local-eigenvector} to the complete subproblem and repeating the standard-form argument in Theorem~\ref{thm:local-reducibility}, a nontrivial effect can exist only when all local factors at party $j$ belong to one variable pair.  Equivalently, $\Gamma_j(M)[U]$ is connected.

Party $j$ may therefore be measured projectively in that variable basis.  Each outcome gives one of the two child subproblems in Definition~\ref{def:splitting-tree}.  The original discrimination protocol, restricted to either promised subset of states, shows that both children remain LOCC distinguishable.  Induction on $|J|$ produces complete splitting trees below both children and hence below $(U,J)$.  Applied at the root, this proves necessity.  The equivalence with the Boolean recursion is immediate from its definition.
\end{proof}

Using the multigraph representation, the theorem distinguishes three operational regimes.  If every layer is disconnected at the root, Corollary~\ref{cor:local-irreducibility} gives local irreducibility and LOCC indistinguishability.  If some root layer is connected but the recursion returns $\False$, the basis is locally reducible but still LOCC indistinguishable.  If the recursion returns $\True$, the tree is both a certificate of distinguishability and an explicit LOCC protocol.

\begin{example}
\label{ex:locc-tree}
For the three-qubit matrix in Section~\ref{sec:example}, the root layer is the connected graph $K(P_1,N_1)$. Party $1$ measures in the basis $\{x_1,x_1^\perp\}$, splitting the candidate rows into $U_0=\{1,2,3,4\}$ and $U_1=\{5,6,7,8\}$. On either branch, the second induced layer is connected, so party $2$ measures in the basis $\{x_2,x_2^\perp\}$. This produces the four two-state subproblems $\{1,2\}$, $\{3,4\}$, $\{5,6\}$, and $\{7,8\}$. Party $3$ distinguishes the first two subproblems in the basis $\{x_3,x_3^\perp\}$ and the last two in the basis $\{x_4,x_4^\perp\}$. Every outcome sequence therefore ends at a singleton candidate set, and the recursion returns $\True$.

Figure~\ref{fig:locc-tree-example} displays the complete splitting tree. Reading it from the root, an internal-node label gives the current candidate set and the next local projective measurement, while the observed outcome selects one of the two outgoing edges. The entire tree is the adaptive three-round LOCC protocol; a single experimental run follows one root-to-leaf path and terminates at the row label of the unknown state.

\begin{figure}[H]
\centering
\resizebox{\linewidth}{!}{%
\begin{tikzpicture}[
  x=1cm,y=1cm,
  splitnode/.style={rounded corners=2pt, draw=black!65, line width=.45pt,
    align=center, inner sep=2.5pt, minimum height=7.5mm,
    font=\scriptsize},
  rootnode/.style={splitnode, fill=layerblue!10, minimum width=34mm},
  secondnode/.style={splitnode, fill=layerorange!12, minimum width=33mm},
  thirdnode/.style={splitnode, fill=layergreen!12, minimum width=28mm},
  leafnode/.style={rounded corners=1.5pt, draw=black!55, fill=black!3,
    inner sep=2pt, minimum width=10mm, font=\scriptsize},
  rootedge/.style={draw=layerblue, line width=.75pt},
  secondedge/.style={draw=layerorange, line width=.75pt},
  thirdedge/.style={draw=layergreen, line width=.75pt},
  outlabel/.style={font=\scriptsize, fill=white, inner sep=1pt}
]
\node[rootnode] (r) at (0,0)
  {$U=\{1,\ldots,8\}$\\party $1$: $\{x_1,x_1^\perp\}$};

\node[secondnode] (a) at (-4.15,-1.75)
  {$U=\{1,2,3,4\}$\\party $2$: $\{x_2,x_2^\perp\}$};
\node[secondnode] (b) at (4.15,-1.75)
  {$U=\{5,6,7,8\}$\\party $2$: $\{x_2,x_2^\perp\}$};

\node[thirdnode] (c) at (-6.25,-3.55)
  {$U=\{1,2\}$\\party $3$: $\{x_3,x_3^\perp\}$};
\node[thirdnode] (d) at (-2.05,-3.55)
  {$U=\{3,4\}$\\party $3$: $\{x_3,x_3^\perp\}$};
\node[thirdnode] (e) at (2.05,-3.55)
  {$U=\{5,6\}$\\party $3$: $\{x_4,x_4^\perp\}$};
\node[thirdnode] (f) at (6.25,-3.55)
  {$U=\{7,8\}$\\party $3$: $\{x_4,x_4^\perp\}$};

\foreach \x/\name/\lab in {
  -7.30/l1/1,-5.20/l2/2,-3.10/l3/3,-1.00/l4/4,
   1.00/l5/5, 3.10/l6/6, 5.20/l7/7, 7.30/l8/8}
  \node[leafnode] (\name) at (\x,-5.05) {$U=\{\lab\}$};

\draw[rootedge] (r)--node[outlabel, pos=.53, sloped] {$x_1$} (a);
\draw[rootedge] (r)--node[outlabel, pos=.53, sloped] {$x_1^\perp$} (b);
\draw[secondedge] (a)--node[outlabel, pos=.53, sloped] {$x_2$} (c);
\draw[secondedge] (a)--node[outlabel, pos=.53, sloped] {$x_2^\perp$} (d);
\draw[secondedge] (b)--node[outlabel, pos=.53, sloped] {$x_2$} (e);
\draw[secondedge] (b)--node[outlabel, pos=.53, sloped] {$x_2^\perp$} (f);
\draw[thirdedge] (c)--node[outlabel, pos=.54, sloped] {$x_3$} (l1);
\draw[thirdedge] (c)--node[outlabel, pos=.54, sloped] {$x_3^\perp$} (l2);
\draw[thirdedge] (d)--node[outlabel, pos=.54, sloped] {$x_3$} (l3);
\draw[thirdedge] (d)--node[outlabel, pos=.54, sloped] {$x_3^\perp$} (l4);
\draw[thirdedge] (e)--node[outlabel, pos=.54, sloped] {$x_4$} (l5);
\draw[thirdedge] (e)--node[outlabel, pos=.54, sloped] {$x_4^\perp$} (l6);
\draw[thirdedge] (f)--node[outlabel, pos=.54, sloped] {$x_4$} (l7);
\draw[thirdedge] (f)--node[outlabel, pos=.54, sloped] {$x_4^\perp$} (l8);
\end{tikzpicture}
}%
\caption{Complete color-splitting tree for Example~\ref{ex:locc-tree}. An internal node lists the active candidate set and the local projective measurement; an edge records the observed outcome and updates the candidate set. The singleton leaves identify the initial basis state.}
\label{fig:locc-tree-example}
\end{figure}
\end{example}
\FloatBarrier

\begin{example}
\label{ex:locc-indistinguishable}
The following three-qubit formal matrix gives a contrasting example:
\[
M_{\mathrm{irr}}=
\begin{pmatrix}
x_2          & x_3          & x_5          \\
x_1          & x_4          & x_5^\perp    \\
x_1          & x_3^\perp    & x_5          \\
x_1          & x_4^\perp    & x_5^\perp    \\
x_2^\perp    & x_3          & x_5          \\
x_1^\perp    & x_3          & x_5^\perp    \\
x_1^\perp    & x_3^\perp    & x_6          \\
x_1^\perp    & x_3^\perp    & x_6^\perp
\end{pmatrix}.
\]
Any two distinct rows contain a complementary pair in at least one column, so the rows define a complete three-qubit OPB.  With the vertices labeled by the row indices, its color layers are
\begin{align*}
\Gamma_1(M_{\mathrm{irr}})
 &=K(\{2,3,4\},\{6,7,8\})\sqcup K(\{1\},\{5\}),\\
\Gamma_2(M_{\mathrm{irr}})
 &=K(\{1,5,6\},\{3,7,8\})\sqcup K(\{2\},\{4\}),\\
\Gamma_3(M_{\mathrm{irr}})
 &=K(\{1,3,5\},\{2,4,6\})\sqcup K(\{7\},\{8\}).
\end{align*}
Thus every color layer is the disjoint union of $K_{3,3}$ and $K_{1,1}$ and is therefore disconnected.  Corollary~\ref{cor:local-irreducibility} shows directly that the OPB is locally irreducible and hence cannot be perfectly distinguished by any finite-round LOCC protocol.  In contrast to Example~\ref{ex:locc-tree}, no recursive search is needed here: the disconnected layers at the root already provide the certificate of LOCC indistinguishability.
\end{example}

\section{Bounds and asymptotic analysis}
\label{sec:bounds}
Write $a_n$ for the number of equivalence classes of $\mathcal O(n)$ under $\sim$; that is, the number of essentially different $n$-qubit OPBs.
\subsection{Counting single layers}
\label{sec:counting}
In this section we count single color layers. Suppose the $j$-th layer contains $k=|A_j|$ variables, and arrange the part sizes $d_\alpha$ in nonincreasing order to obtain the \emph{spectrum}
$\lambda=(d_1,\dots,d_k)$
of the layer. By Section~\ref{sec:not-mat}, $\lambda$ is a partition of $N/2$, written $\lambda\vdash N/2$:
\[
d_1\ge\dots\ge d_k\ge1,
\qquad
\sum_{i=1}^{k}d_i=\frac{N}{2}.
\]
\begin{definition}[Spectrum]
\label{def:spectrum}
A spectrum is a partition $\lambda=(d_1,\dots,d_k)\vdash N/2$. The quantity $m_s(\lambda)$ denotes the number of parts of $\lambda$ equal to $s$.
\end{definition}
\begin{theorem}[Counting single layers]
\label{thm:count}
Given the spectrum $\lambda=(d_1,\dots,d_k)\vdash N/2$ of one column, the number of \emph{labeled} color layers on the vertex set $V$ whose connected components have part sizes given by $\lambda$ is
\[
L_N(\lambda)
=
\frac{N!}
{2^k\prod_{i=1}^{k}(d_i!)^2\prod_{s\ge1}m_s(\lambda)!}.
\]
\end{theorem}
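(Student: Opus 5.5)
The plan is a direct orbit-counting argument: describe a labeled layer by the combinatorial data that determines it, count ordered versions of that data, then divide by the symmetries that produce the same graph. A labeled color layer with spectrum $\lambda$ is a simple graph on $V$ that is a disjoint union $\bigsqcup_{i=1}^k K(P_i,N_i)$ with $|P_i|=|N_i|=d_i$. Because $\sum_i d_i=N/2$, the sets $P_i\sqcup N_i$ partition all of $V$, as in Section~\ref{sec:not-mat}. So the layer is determined by the unordered family of unordered pairs $\{P_i,N_i\}$, which is a partition of $V$ into $2k$ blocks that are paired up, with the $i$-th pair consisting of two blocks of size $d_i$.

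First, count the ordered data. An ordered sequence $(P_1,N_1,P_2,N_2,\dots,P_k,N_k)$ of disjoint subsets with the prescribed sizes is a multinomial choice, so there are
\[
\frac{N!}{\prod_{i=1}^k (d_i!)^2}
\]
such sequences. Second, show that two such sequences give the same graph exactly when they differ by two kinds of moves. The first move swaps $P_i$ and $N_i$ within a component, giving a factor $2$ for each $i$ and hence $2^k$ in total. The second move permutes the indices of components of equal size, giving $\prod_s m_s(\lambda)!$ in total. To justify this, recover the data from the graph, exactly as in the proof of Theorem~\ref{thm:equiv}. The connected components of the layer are the blocks $K(P_i,N_i)$, since each is connected because $d_i\ge1$, and they are vertex-disjoint. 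Each component is a connected bipartite graph, so its bipartition is unique up to swapping the two sides.

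Third, check that the group $(\mathbb Z/2)^k\rtimes\prod_s S_{m_s}$ acts freely on ordered sequences. A nontrivial swap changes the ordered tuple because $P_i\ne N_i$, these being disjoint and nonempty. A nontrivial permutation of equal-size components changes the tuple because the component vertex sets are distinct and disjoint. Dividing the ordered count by the orbit size $2^k\prod_s m_s(\lambda)!$ then gives the stated formula for $L_N(\lambda)$.

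The main obstacle is the freeness and exact-recovery step: confirming that every labeled graph with the given spectrum arises and that no extra coincidences occur. The coincidence worry is, for example, a $K_{1,1}$ component, where swapping the two sides gives the same unordered pair but still a different ordered tuple. I would handle this by emphasizing that the count is over ordered tuples, where swaps always act nontrivially, so each graph corresponds to exactly $2^k\prod_s m_s(\lambda)!$ tuples.
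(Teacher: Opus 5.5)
Your proof is correct and follows essentially the same route as the paper: both start from the multinomial count $N!/\prod_{i}(d_i!)^2$ of ordered tuples $(P_1,N_1,\dots,P_k,N_k)$ and quotient by the side swaps within each component and by permutations of equal-size components. The paper divides in two stages, first choosing the unordered supports $P_i\sqcup N_i$ (giving $\binom{N}{2d_1,\dots,2d_k}/\prod_s m_s(\lambda)!$) and then splitting each support in $\frac12\binom{2d_i}{d_i}$ ways, whereas you divide once by a freely acting group of order $2^k\prod_s m_s(\lambda)!$ and explicitly justify, via connected components and uniqueness of bipartitions, that the orbits are exactly the layers---a point the paper leaves implicit.
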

\begin{proof}
First count the ways of placing the vertex set $V$ into sets
\[
P_1,N_1,P_2,N_2,\dots,P_k,N_k
\]
of sizes
\[
d_1,d_1,d_2,d_2,\dots,d_k,d_k.
\]
There are
\[
\binom{N}{d_1,d_1,d_2,d_2,\dots,d_k,d_k}
=
\frac{N!}{\prod_{i=1}^{k}(d_i!)^2}
\]
such placements.
Since interchanging $P_i$ and $N_i$ does not matter, we may equivalently count the ways of placing $V$ into sets
$P_1\sqcup N_1,\dots,P_k\sqcup N_k$
of sizes $2d_1,\dots,2d_k$. There are
$\binom{N}{2d_1,2d_2,\dots,2d_k}$
such placements. Since interchanging blocks of equal size does not matter, this count becomes
\[
\binom{N}{2d_1,2d_2,\dots,2d_k}
\cdot
\frac{1}{\prod_s m_s(\lambda)!}.
\]
For each $P_i\sqcup N_i$ of size $2d_i$, splitting it into two unlabeled parts of size $d_i$ can be done in
$\frac12\binom{2d_i}{d_i}$
ways. Therefore
\[
\begin{aligned}
L_N(\lambda)
&=
\binom{N}{2d_1,2d_2,\dots,2d_k}
\cdot
\frac{1}{\prod_s m_s(\lambda)!}
\cdot
\prod_{i=1}^{k}
\left(
\frac12\binom{2d_i}{d_i}
\right)\\
&=
\frac{N!}
{2^k\prod_{i=1}^{k}(d_i!)^2\prod_s m_s(\lambda)!}.
\end{aligned}
\]
\end{proof}
\begin{definition}[Total number of single layers]
\label{def:LN}
The total number of single layers for a given $n$ with $N=2^n$ is $L_N=\sum_{\lambda\vdash N/2}L_N(\lambda)$.
\end{definition}
\begin{theorem}[Recurrence]
\label{thm:recur}
The numbers $L_N$ satisfy
\[
L_N
=
\sum_{d=1}^{N/2}
\binom{N-1}{2d-1}
\binom{2d-1}{d-1}
L_{N-2d},
\qquad
L_0=1.
\]
\end{theorem}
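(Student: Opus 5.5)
The plan is to count single layers by conditioning on the component that contains a distinguished vertex, the standard "root the first vertex" recurrence for set-partition-type structures. Extend the definition of $L_m$ to every even $m\ge 0$ as the number of graphs on a labeled $m$-set that are disjoint unions of balanced complete bipartite graphs $K_{d,d}$ (with $d\ge1$) covering all vertices. Set $L_0=1$ for the empty graph. By Theorem~\ref{thm:count} and Definition~\ref{def:LN}, this agrees with the paper's $L_N$ when $m=N$, since summing over spectra $\lambda\vdash m/2$ is the same as summing over all such decompositions. The recurrence has to be proved for general even $m$, because $N-2d$ is no longer a power of $2$. That is harmless, since neither the count nor the argument uses $m=2^n$.

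Fix the vertex $1\in V$ and classify layers by the size $2d$ of the component $K(P,N')$ containing vertex $1$, where $1\le d\le N/2$.

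\textbf{Step 1.} Choose the remaining $2d-1$ vertices of this component from the other $N-1$ vertices. There are $\binom{N-1}{2d-1}$ ways.

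\textbf{Step 2.} Fix the bipartition of the component. Because the bipartition of $K_{d,d}$ is unordered, we may normalize by requiring that vertex $1$ lies in $P$. The bipartition is then determined by choosing the other $d-1$ members of $P$ among the $2d-1$ chosen vertices, giving $\binom{2d-1}{d-1}$ ways. This normalization is the only point that needs care. It replaces the factor $\tfrac12\binom{2d}{d}$ from the proof of Theorem~\ref{thm:count}, and indeed $\tfrac12\binom{2d}{d}=\binom{2d-1}{d-1}$. It also removes any double counting from swapping $P$ and $N'$.

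\textbf{Step 3.} The remaining $N-2d$ vertices must carry an arbitrary layer of the same type, which can be chosen in $L_{N-2d}$ ways. When $d=N/2$ this factor is $L_0=1$.

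To close the argument, observe that the map from a layer to the triple (component of $1$ with its bipartition, restriction to the rest) is a bijection. A layer is a disjoint union of its components, so removing the component of vertex $1$ leaves a valid layer on the complement. Conversely, any choice in Steps 1--3 assembles into a valid layer. Summing over $d$ gives the recurrence.

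As a sanity check I would verify small cases. For $N=2$ the formula gives $L_2=1$. For $N=4$ it gives $L_4=\binom{3}{1}\binom{1}{0}L_2+\binom{3}{3}\binom{3}{1}L_0=3+3=6$. Theorem~\ref{thm:count} gives $L_4((1,1))=\tfrac{24}{4\cdot 2}=3$ and $L_4((2))=\tfrac{24}{2\cdot 4}=3$, which agrees. I expect no step to be a real obstacle; the main subtlety is the unordered-bipartition normalization in Step 2.
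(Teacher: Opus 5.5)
Your proof is correct and follows the paper's argument essentially verbatim. Like the paper, you fix a vertex, choose the $2d-1$ other vertices of its $K_{d,d}$ block, normalize the vertex to the $P$-side and pick its $d-1$ companions, and count the remaining vertices by $L_{N-2d}$; your explicit extension of $L_m$ to all even $m$ (needed because $N-2d$ is generally not a power of $2$) makes precise a point the paper leaves implicit.
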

\begin{proof}
Fix a vertex $v\in V$, and suppose $v$ lies on the $P$-side of a complete bipartite block $K_{d,d}$. Here $1\le d\le N/2$. From the remaining $N-1$ vertices, choose $2d-1$ to form the block, then choose the $d-1$ of them that share the side of $v$. This gives
$$\binom{N-1}{2d-1}\binom{2d-1}{d-1}$$
choices. The remaining $N-2d$ vertices are partitioned in $L_{N-2d}$ ways. Summing over $d$ gives the recurrence.
\end{proof}
\begin{remark}
\label{rem:layers}
The first values are $L_0=1$, $L_2=1$, $L_4=6$, $L_6=70$, and $L_8=1365$, in agreement with a direct summation of Theorem~\ref{thm:count}; see Section~\ref{sec:asympt}.
\end{remark}
\subsection{Lower bound}
\label{sec:lb}
\begin{theorem}[Lower bound]
\label{thm:lb}
For every $n\ge1$,
\[
a_{n+1}\ge \binom{a_n+1}{2}.
\]
\end{theorem}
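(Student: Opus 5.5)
The plan is a doubling construction. Given two $n$-qubit OPB formal matrices $M,M'\in\mathcal O(n)$, rename the variables of $M'$ so that its variable set is disjoint from that of $M$. Introduce a new variable $x_0$ and form the $2^{n+1}\times(n+1)$ matrix
\[
M\oplus M'=\begin{pmatrix} x_0\ \ \cdots & M\\ x_0^\perp\ \ \cdots & M'\end{pmatrix},
\]
whose first column is constantly $x_0$ on the top $2^n$ rows and $x_0^\perp$ on the bottom $2^n$ rows. I will show that this map sends unordered pairs of equivalence classes $\{[M],[M']\}$ injectively into equivalence classes of $\mathcal O(n+1)$. Since unordered pairs with repetition number $\binom{a_n+1}{2}$, this gives the bound.

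\textbf{Step 1: $M\oplus M'\in\mathcal O(n+1)$ and can be put in standard form.} Rows remain pairwise distinct-lettered. Two rows in the same half are orthogonal through the old columns. Two rows in different halves are orthogonal through column $0$. For realizability, take realizations of $M$ and $M'$ and tensor with $|0\rangle,|1\rangle$. To get standard form, apply a generic single-qubit unitary on each qubit of the $M'$ realization. A local unitary maps an orthonormal basis to an orthonormal basis and preserves the standard form inside $M'$. For generic choices, every direction pair of $M'$ in column $j$ is then neither parallel nor orthogonal to any direction pair of $M$ in the same column, since only finitely many bad conditions must be avoided.

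\textbf{Step 2: structure of $\Gamma(M\oplus M')$.} Let $P$ and $N$ be the top and bottom halves. Color layer $0$ is $K(P,N)$, which is connected. Every other layer $j$ is the disjoint union of $\Gamma_j(M)$ on $P$ and $\Gamma_j(M')$ on $N$. Each of these two graphs is nonempty by Lemma~\ref{lem:balance}, so layer $j$ is disconnected. Hence color $0$ is the \emph{unique} connected layer, and its bipartition $\{P,N\}$ is intrinsic. Moreover, the graph induced on $P$ by the colors $j\ne0$ is exactly $\Gamma(M)$, and the graph induced on $N$ is $\Gamma(M')$.

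\textbf{Step 3: injectivity, which is the main point.} By Theorem~\ref{thm:equiv}, it suffices to show the following: if $\Gamma(M_1\oplus M_1')\cong\Gamma(M_2\oplus M_2')$ via $(\sigma,\tau)$, then $\{[M_1],[M_1']\}=\{[M_2],[M_2']\}$.
\begin{itemize}
\item An isomorphism preserves connectivity of layers, so $\tau(0)=0$.
\item Since $\sigma$ maps $K(P_1,N_1)$ onto $K(P_2,N_2)$, uniqueness of the bipartition of a connected bipartite graph gives $\sigma(P_1)\in\{P_2,N_2\}$.
\item Restricting $\sigma$ to $P_1$ and $N_1$, and $\tau$ to the colors $\{1,\dots,n\}$, yields isomorphisms of $\Gamma(M_1)$ and $\Gamma(M_1')$ onto $\Gamma(M_2),\Gamma(M_2')$ in some order.
\item Theorem~\ref{thm:equiv} converts these graph isomorphisms back into matrix equivalences.
\end{itemize}
Conversely, equivalent inputs give equivalent outputs, and swapping $M$ and $M'$ corresponds to a flip of $x_0$. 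Hence the map is well defined on unordered pairs of classes.

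The main obstacle I anticipate is making sure the halves cannot be mixed by an isomorphism. This is exactly why Step 2 identifies the splitting color canonically as the unique connected layer, which requires every layer of each summand to be nonempty. A secondary point is the genericity argument for standard form in Step 1, which is what makes the graph $\Gamma$ well defined as stated.
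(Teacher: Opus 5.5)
Your proof takes essentially the same route as the paper's. The paper identifies the new color as the unique layer containing a $K_{N,N}$ component, while you identify it as the unique connected layer. Both arguments then restrict $\sigma$ to the two halves and apply Theorem~\ref{thm:equiv}.

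One claim in your write-up is false: equivalent inputs do \emph{not} in general give equivalent outputs, so $\{[M],[M']\}\mapsto[M\oplus M']$ is not well defined on classes. The reason is that a column permutation of $M\oplus M'$ acts on both blocks simultaneously, so the relative alignment of the columns of $M$ and $M'$ is invariant under equivalence.

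Here is a counterexample for $n=2$. Let $A$ have rows $(x_1,x_2),(x_1,x_2^\perp),(x_1^\perp,x_3),(x_1^\perp,x_3^\perp)$, let $A'$ be a renamed copy of $A$, and let $B$ be $A$ with its two columns swapped, so $A\sim A'\sim B$. The multiset of column spectra is $\{(4),(2,2),(1,1,1,1)\}$ for $A\oplus A'$ but $\{(4),(2,1,1),(2,1,1)\}$ for $A\oplus B$. This multiset is an equivalence invariant, so $A\oplus A'\not\sim A\oplus B$.

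This does not damage the bound, because the count only needs injectivity. Fix one representative $R_c$ of each class $c$, and for each unordered pair $\{c,c'\}$ form $R_c\oplus R_{c'}$, using a renamed copy when $c=c'$. Your Step~3 then shows that these $\binom{a_n+1}{2}$ matrices are pairwise inequivalent. You should replace the well-definedness sentence with this choice of representatives, which is what the paper's proof does implicitly.
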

\begin{proof}
Take two $n$-qubit OPB formal matrices $A,B$ (possibly equal) with disjoint variable-name sets, and let $x_0,x_0^\perp$ be new variables not occurring in $A$ or $B$. Then
\[
\begin{pmatrix}
x_0 & A\\
x_0^\perp & B
\end{pmatrix}
\]
is a valid $(n+1)$-qubit OPB formal matrix. It has $2N=2^{n+1}$ rows; every row contains $n+1$ pairwise distinct letters; two rows within the same block are orthogonal by the orthogonality of $A$ or $B$, and two rows across blocks are orthogonal because their first entries are $x_0$ and $x_0^\perp$; realizability is evident.

Choosing two classes with repetition from the $a_n$ classes yields $\binom{a_n+1}{2}$ such $(n+1)$-qubit formal matrices, and these are pairwise inequivalent. Indeed, in the corresponding edge-colored complete multigraph, the first color layer is the unique layer containing a connected component whose two parts both have size $N$, namely $K_{N,N}$. All other layers come from $A$ or $B$, and their connected components have part sizes at most $N/2$. 

Hence any graph isomorphism must preserve this layer and map the remaining $n$ layers to the subgraphs on the two sides of $K_{N,N}$. By Theorem~\ref{thm:equiv}, this recovers exactly the equivalence classes of the unordered pair $\{A,B\}$, with $A=B$ allowed. Thus
$a_{n+1}\ge \binom{a_n+1}{2}$.
\end{proof}
\begin{corollary}[Exponential lower bound]
\label{cor:lb}
If $a_{n_0}\ge B$ for some $n_0$, then for all $n\ge n_0$,
\[
a_n
\ge
2^{1+(\log B-1)2^{n-n_0}}.
\]
In particular, $a_n\ge2^{\Omega(2^n)}$.
\end{corollary}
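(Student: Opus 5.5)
We iterate the recurrence of Theorem~\ref{thm:lb} after passing to logarithms (base $2$). Since $\binom{a+1}{2}=a(a+1)/2\ge a^2/2$, Theorem~\ref{thm:lb} gives $a_{n+1}\ge a_n^2/2$. Put $b_n=\log a_n-1$. Taking logarithms,
\[
\log a_{n+1}\ge 2\log a_n-1,
\qquad\text{i.e.}\qquad
b_{n+1}\ge 2\log a_n-2=2b_n .
\]
By induction from $n_0$, where $b_{n_0}\ge \log B-1$, we obtain $b_n\ge 2^{n-n_0}b_{n_0}$ for all $n\ge n_0$.

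Passing from $b_{n_0}$ to the bound $\log B-1$ requires multiplying an inequality by a positive factor. If $\log B-1\ge 0$, then $b_n\ge 2^{n-n_0}b_{n_0}\ge(\log B-1)2^{n-n_0}$, and so $a_n=2^{1+b_n}\ge 2^{1+(\log B-1)2^{n-n_0}}$. The induction step $b_{n+1}\ge 2b_n$ needs no sign hypothesis, because it comes directly from the displayed inequality.

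If $B<2$, the right-hand side $2^{1+(\log B-1)2^{n-n_0}}$ is at most $2$. Since $a_n\ge1$, some care is needed here. The cleanest route is to note that the corollary is only meaningful for $B\ge2$: every $a_n\ge1$, and in fact $a_n\ge2$ for $n\ge2$. For the edge case I would simply observe that $a_n\ge a_{n_0}$, since $\binom{a+1}{2}\ge a$, so the statement can be applied with $B$ replaced by $\max(B,2)$. This sign issue is the only real subtlety.

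For the ``in particular'' clause, I would exhibit an $n_0$ with $a_{n_0}\ge3$, so that $\log B-1=\log 3-1>0$. For instance, at $n=2$ there are at least two classes: the computational basis with spectra $(2),(2)$, and a basis with spectra $(2),(1,1)$. Theorem~\ref{thm:lb} then gives $a_3\ge3$. Taking $n_0=3$ and $B=3$ yields $a_n\ge 2^{c2^n}$ with $c=(\log3-1)/8>0$, which is $2^{\Omega(2^n)}$.
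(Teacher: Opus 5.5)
Your argument is essentially the paper's. Both take logarithms of $a_{n+1}\ge a_n^2/2$ and iterate the affine recursion $\log a_{k+1}\ge 2\log a_k-1$; the paper does this via an auxiliary sequence started at $B$, you via the shift $b_n=\log a_n-1$. Your choice $B=3$, obtained from $a_2\ge2$, correctly handles the ``in particular'' clause, and is more careful than the paper's ``$B\ge2$'', since $B=2$ gives a constant exponent.

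The one correction is that your ``sign issue'' does not exist. Multiplying $b_{n_0}\ge\log B-1$ by $2^{n-n_0}>0$ preserves the inequality whatever the sign of $\log B-1$. So the chain $b_n\ge 2^{n-n_0}b_{n_0}\ge(\log B-1)2^{n-n_0}$ holds for every $B>0$, and you should drop the $\max(B,2)$ detour. That detour would not even apply at $n_0=1$, where $a_1=1$.
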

\begin{proof}
Set $b_{n_0}=B$ and
$b_{k+1}=\frac{b_k(b_k+1)}{2}$ for $k\ge n_0$.
By Theorem~\ref{thm:lb}, $a_k\ge b_k$. Since
$b_{k+1}\ge \frac{b_k^2}{2}$,
we obtain
$\log b_{k+1}\ge 2\log b_k-1$.
With $c_k=\log b_k$, iteration yields
\[
\begin{aligned}
c_n
&\ge
2^{n-n_0}c_{n_0}
-
(2^{n-n_0}-1)\\
&=
1+(\log B-1)2^{n-n_0}.
\end{aligned}
\]
Hence
$a_n \ge 2^{1+(\log B-1)2^{n-n_0}}$ for $n\ge n_0$.
When $B\ge2$, the exponent grows like $\Theta(2^n)$, so
$a_n\ge2^{\Omega(2^n)}$.
\end{proof}
\begin{remark*}
The known exact values $a_2=2$, $a_3=17$, and $a_4=27385$ may serve as the starting value $B$ \cite{chen-djok,chen-jiang}. The numerical table in Section~\ref{sec:asympt} uses $B=a_4=27385$.
\end{remark*}

\subsection{Upper bound}
\label{sec:ub}
\begin{theorem}[Upper bound]
\label{thm:ub}
For every $n\ge1$,
\[
a_n\le B_{2^{n-1}}^n,
\]
where $B_m$ denotes the Bell number of an $m$-element set, i.e., the number of all set partitions of an $m$-element labeled set.
\end{theorem}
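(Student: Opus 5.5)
The plan is to build, for every $M\in\mathcal O(n)$, a canonical vertex labeling in which each color layer $\Gamma_j(M)$ is encoded by a single set partition of a fixed $2^{n-1}$-element set. Theorem~\ref{thm:equiv} then does the rest: $\Gamma(M)$, and hence the class of $M$, is determined by an $n$-tuple of such partitions. This gives at most $B_{2^{n-1}}^n$ classes.

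\emph{Step 1 (binary labeling of rows).} Fix an orientation of every variable pair; by the direction flips in Definition~\ref{def:equiv} this costs nothing. Assign to row $r$ the word $\beta(r)=(\varepsilon_{r1},\dots,\varepsilon_{rn})\in\{0,1\}^n$, where $\varepsilon_{rj}=0$ if $M_{rj}$ is some $x_\alpha$ and $\varepsilon_{rj}=1$ if $M_{rj}$ is some $x_\alpha^\perp$. I claim $\beta$ is a bijection $V\to\{0,1\}^n$. Take $r\neq s$. By condition~(2) of Definition~\ref{def:fm} there is a column $j$ with $\{M_{rj},M_{sj}\}=\{x_\alpha,x_\alpha^\perp\}$, so $\varepsilon_{rj}\neq\varepsilon_{sj}$. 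Hence $\beta$ is injective, and since $|V|=2^n$ it is bijective. Relabel the vertices by their words; this is a row permutation and does not change the class.

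\emph{Step 2 (the forced matching).} Fix $j$ and a word $w$, and let $w'=w\oplus e_j$. The rows $w,w'$ are orthogonal, so they carry a complementary pair in some column $k$, and a complementary pair always has different bits in that column. Since $w$ and $w'$ agree in every column $k\neq j$, the only possibility is $k=j$. Thus $w$ and $w'$ lie in the same block $K(P_\alpha,N_\alpha)$ of $\Gamma_j(M)$, on opposite sides. The edges $\{w,w\oplus e_j\}$ therefore form a perfect matching $\mathcal M_j$ of $2^{n-1}$ edges inside $\Gamma_j(M)$. These edges are indexed by the words in $\{0,1\}^{n-1}$ obtained by deleting coordinate $j$.

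\emph{Step 3 (the encoding).} Each edge of $\mathcal M_j$ lies in exactly one block of $\Gamma_j(M)$. Grouping edges by block gives a set partition $\Pi_j$ of $\{0,1\}^{n-1}$. Conversely, $\Pi_j$ recovers $\Gamma_j(M)$: every vertex lies on exactly one edge of $\mathcal M_j$, so the vertex set of block $\alpha$ is the union of the edges in the corresponding part of $\Pi_j$. The sides are recovered because, by the orientation chosen in Step~1, $P_\alpha$ is exactly the set of endpoints with $j$-th bit $0$ and $N_\alpha$ the set with $j$-th bit $1$. Hence the tuple $(\Pi_1,\dots,\Pi_n)$ determines the labeled graph $\Gamma(M)$. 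By Theorem~\ref{thm:equiv} it therefore determines the class of $M$. Choosing one such tuple for each class gives an injection from the $a_n$ classes into a set of size $B_{2^{n-1}}^n$, which proves the bound.

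The main obstacle is Step~2, namely showing that the matching is forced to lie inside a single block with the correct sides. This comes down to the observation that complementary letters always have different bits. That observation is also what makes $\beta$ injective, so both the relabeling and the matching rest on the same fact.
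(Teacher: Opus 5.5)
Your proposal is correct and follows essentially the same route as the paper: the side-marking bijection $V\to\{0,1\}^n$, the forced perfect matching $\mathcal M_j$ of edges $\{w,w\oplus e_j\}$ lying inside $\Gamma_j(M)$, and the encoding of each layer by a set partition of its $2^{n-1}$ matching edges. Your indexing of those edges by $\{0,1\}^{n-1}$, combined with the explicit appeal to Theorem~\ref{thm:equiv} to obtain an injection from classes into $n$-tuples of partitions, is a cleaner way of making precise the paper's remark that no extra factor is needed for the choice of coarsened layers.
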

\begin{proof}
Write 
\begin{eqnarray}
\label{eq:N=2^n,m=2^{n-1}}    
N=2^n, m=2^{n-1}.
\end{eqnarray}
For an OPB formal matrix $M$, the $j$-th color layer can be written as
\[
\Gamma_j(M)
=
\bigsqcup_{\alpha\in A_j}
K(P_\alpha,N_\alpha).
\]
By Lemma~\ref{lem:balance}, each connected component satisfies
$|P_\alpha|=|N_\alpha|=d_\alpha$,
and
$\sum_{\alpha\in A_j}d_\alpha=m$.
For each connected component of layer $j$, choose one of the two orientations, and define
\[
P_j=\bigsqcup_{\alpha\in A_j}P_\alpha,
\qquad
N_j=\bigsqcup_{\alpha\in A_j}N_\alpha.
\]
Then
\[
V=P_j\sqcup N_j,
\qquad
|P_j|=|N_j|=m.
\]
Merging all connected components of layer $j$ gives a balanced complete bipartite graph
$\widehat{\Gamma}_j=K(P_j,N_j)$
on $V$, called the \emph{coarsened layer} of color $j$. Clearly,
$E(\Gamma_j)\subseteq E(\widehat{\Gamma}_j)$.
Since the original $n$ color layers cover $K_N$, so do the coarsened layers:
\[
E(K_N)
=
\bigcup_{j=1}^{n}
E(\widehat{\Gamma}_j).
\]
For each vertex $r\in V$, define its \emph{side marking} with respect to all coarsened layers,
$\varepsilon(r) = (\varepsilon_1(r),\dots,\varepsilon_n(r)) \in\{0,1\}^n$,
where
\[
\varepsilon_j(r)
=
\begin{cases}
0,&r\in P_j,\\
1,&r\in N_j.
\end{cases}
\]
Take two distinct vertices $r,s\in V$. By the covering condition, there is a color $j$ with $\{r,s\}\in E(\Gamma_j)$. Then $r$ and $s$ lie on opposite sides of the component $K(P_\alpha,N_\alpha)$ containing them, hence on opposite sides of the coarsened layer $K(P_j,N_j)$. Therefore $\varepsilon_j(r)\ne\varepsilon_j(s)$, so
$\varepsilon(r)\ne\varepsilon(s)$.

Thus $\varepsilon:V\to\{0,1\}^n$ is injective. Since both sets have size $2^n$, it is a bijection.
Fix a color $j$. Since every side marking corresponds to exactly one vertex, for every $r\in V$ there is a unique vertex $\iota_j(r)$ whose marking $\varepsilon(\iota_j(r))$ differs from $\varepsilon(r)$ only in the $j$-th position. Moreover,
$\iota_j(\iota_j(r))=r$ and $\iota_j(r)\ne r$.
Hence these vertex pairs form a perfect matching of $V$:
\[
\mathcal{M}_j
=
\bigl\{\{r,\iota_j(r)\}:r\in V\bigr\},
\qquad
|\mathcal{M}_j|
=
\frac{N}{2}
=
m.
\]
We show that every edge of $\mathcal{M}_j$ must occur in the original $j$-th color layer. Take $\{r,\iota_j(r)\}\in\mathcal{M}_j$. Since all color layers cover $K_N$, this pair is adjacent in some layer $\Gamma_k$. If
$\{r,\iota_j(r)\}\in E(\Gamma_k)$,
then the two vertices lie on opposite sides of the coarsened layer $K(P_k,N_k)$, so their markings differ in the $k$-th position. By the definition of $\iota_j(r)$, however, the two markings differ only in the $j$-th position; hence $k=j$. Therefore
$\mathcal{M}_j\subseteq E(\Gamma_j)$.
We call $\mathcal{M}_j$ the \emph{forced perfect matching} of the $j$-th color layer.
For each $\alpha\in A_j$, define
\[
\mathcal{B}_\alpha
=
\{e\in\mathcal{M}_j:
e\subseteq P_\alpha\sqcup N_\alpha\}.
\]
The vertex sets of distinct variable branches are disjoint, so the $\mathcal{B}_\alpha$ are pairwise disjoint. Every forced matching edge lies in some connected component of layer $j$, so
\[
\mathcal{M}_j
=
\bigsqcup_{\alpha\in A_j}
\mathcal{B}_\alpha.
\]
Each $\mathcal{B}_\alpha$ is nonempty. Indeed, if $r\in P_\alpha\sqcup N_\alpha$, then $\{r,\iota_j(r)\}\in E(\Gamma_j)$; since distinct connected components of the same layer have no edges between them, $\iota_j(r)$ still belongs to $P_\alpha\sqcup N_\alpha$. Thus the variable branches of layer $j$ give a set partition of the forced matching edges $\mathcal{M}_j$.

Conversely, fix a coarsened layer $K(P_j,N_j)$ with an orientation, and take a set partition
$\mathcal{M}_j = \mathcal{B}_1\sqcup\dots\sqcup\mathcal{B}_k$
of the forced matching. For each block $\mathcal{B}_t$, let $P(\mathcal{B}_t)$ and $N(\mathcal{B}_t)$ be the sets of endpoints of its matching edges lying in $P_j$ and $N_j$, respectively. Each matching edge has one endpoint on each side, so
$|P(\mathcal{B}_t)| = |N(\mathcal{B}_t)| = |\mathcal{B}_t|$.
The block therefore determines a balanced complete bipartite graph
$K(P(\mathcal{B}_t),N(\mathcal{B}_t))$.

The disjoint union of these graphs recovers a possible $j$-th color layer. Hence, for a fixed coarsened layer, the variable-branch structures of layer $j$ are in bijection with the set partitions of $\mathcal{M}_j$.

The matching $\mathcal{M}_j$ consists of $m=2^{n-1}$ labeled edges, so the number of its set partitions is the Bell number $B_m$. Moreover, by the bijectivity of the side-marking map, any two systems of coarsened layers are isomorphic under a vertex map that preserves the markings, so no additional factor for the choice of coarsened layers is needed under graph equivalence.

Temporarily regard the $n$ colors as labeled and ignore whether combinations of partitions of the different layers still cover $K_N$. Then all $n$ layers admit at most
$B_m^n$
combinations. Illegal partition combinations and equivalences coming from vertex or color permutations can only decrease the count. Therefore
\[
a_n \le B_m^n = B_{2^{n-1}}^n.\]
\end{proof}
\begin{remark}
\label{rem:bell}
The Bell number $B_m$ is the number of all set partitions of an $m$-element labeled set. If
$\genfrac\{\}{0pt}{}{m}{k}$
denotes the Stirling number of the second kind, i.e., the number of partitions of an $m$-element set into exactly $k$ nonempty blocks, then
$B_m = \sum_{k=0}^{m} \genfrac\{\}{0pt}{}{m}{k}$.
The Bell numbers also satisfy the recurrence
$B_{m+1} = \sum_{r=0}^{m} \binom{m}{r} B_{m-r}$.
\end{remark}

\begin{example}[Forced matchings for $n=3$]
\label{ex:forced}
Take $n=3$, so $N=8$ and $m=4$ in \eqref{eq:N=2^n,m=2^{n-1}}. Label the eight vertices by their side markings in the three coarsened layers as the binary strings $000,001,010,011,100,101,110,111$. The binary string records on which side of each of the three coarsened balanced cuts the vertex lies.
Consider the third color layer. The vertex differing from a given vertex only in the third side marking is unique, so the forced perfect matching of the third layer is
\[
\mathcal{M}_3
=
\bigl\{
\{000,001\},
\{010,011\},
\{100,101\},
\{110,111\}
\bigr\}.
\]
Suppose the first two matching edges form one block and the last two another:
\[
\begin{gathered}
\mathcal{B}_1
=
\bigl\{
\{000,001\},
\{010,011\}
\bigr\},\\
\mathcal{B}_2
=
\bigl\{
\{100,101\},
\{110,111\}
\bigr\}.
\end{gathered}
\]
Block $\mathcal{B}_1$ gives the endpoint sets $\{000,010\}$ and $\{001,011\}$ on the two sides of the third coarsened layer, so it determines the variable branch
$K(\{000,010\},\{001,011\})$.
Similarly, $\mathcal{B}_2$ determines
$K(\{100,110\},\{101,111\})$.
Hence the third color layer is
$\Gamma_3 = K(\{000,010\},\{001,011\}) \sqcup K(\{100,110\},\{101,111\})$.
Placing several forced matching edges into the same block is equivalent to letting the endpoints of these edges use the same variable pair in that layer. Since $\mathcal{M}_3$ has four labeled matching edges, it admits
$B_4=15$
set partitions. 

Thus, for a fixed coarsened layer, the third color layer admits at most $15$ variable-branch structures. These side markings are realized by the three-qubit OPB of Section~\ref{sec:example}: its rows $1,\dots,8$ carry the markings $000,\dots,111$ in this order, and the partition $\mathcal{B}_1\sqcup\mathcal{B}_2$ above yields exactly the third color layer of that OPB.
\end{example}
\subsection{Asymptotic analysis and numerical results}
\label{sec:asympt}
In this section, $\log$ denotes the base-$2$ logarithm and $\ln$ the natural logarithm.
\begin{theorem}[Asymptotic behavior]
\label{thm:asympt}
There exist constants $c,C>0$ such that
\[
2^{c2^n}
\le
a_n
\le
2^{Cn^22^n}
\]
for all sufficiently large $n$. More precisely,
\[
a_n
=
2^{2^{n+o(n)}},
\qquad
n\to\infty.
\]
\end{theorem}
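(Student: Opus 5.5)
The plan is to combine Corollary~\ref{cor:lb} for the lower bound with Theorem~\ref{thm:ub} and a standard Bell-number estimate for the upper bound. Then take iterated logarithms of both sides.

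\textbf{Lower bound.} Apply Corollary~\ref{cor:lb} with $n_0=3$ and $B=a_3=17$, or with $n_0=4$ and $B=a_4=27385$. The only input needed is some starting value with $\log B>1$; even $a_2=2$ fails because then $\log B-1=0$. This gives
\[
a_n\ge 2^{1+(\log 17-1)2^{n-3}}\ge 2^{c2^n}
\]
with $c=(\log17-1)/8>0$, valid for all $n\ge3$. If one prefers not to rely on the quoted exact values, it suffices to exhibit three pairwise inequivalent three-qubit OPBs. By Theorem~\ref{thm:equiv}, their multigraphs can be told apart simply by their multisets of layer spectra.

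\textbf{Upper bound.} Use the elementary estimate $B_m\le m^m$. It holds because every set partition of $[m]$ is determined by a map $[m]\to[m]$ sending each element to the least element of its block. Theorem~\ref{thm:ub} with $m=2^{n-1}$ then yields
\[
a_n\le B_{2^{n-1}}^n\le \bigl(2^{(n-1)2^{n-1}}\bigr)^n=2^{n(n-1)2^{n-1}}\le 2^{n^2 2^n},
\]
so $C=1$ works. A sharper input such as $\ln B_m=m\ln m-m\ln\ln m+O(m)$ is unnecessary, since only the double-exponential order matters.

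\textbf{Asymptotics.} Taking $\log\log$ of the two bounds gives
\[
n+\log c\le \log\log a_n\le n+2\log n+\log C.
\]
Both error terms are $o(n)$, so $\log\log a_n=n+o(n)$, which is equivalent to $a_n=2^{2^{n+o(n)}}$.

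The main obstacle is the lower bound's dependence on a starting value $B>2$. Corollary~\ref{cor:lb} degenerates for $B=2$, so the argument must cite $a_3=17$ from the literature or justify $a_3\ge3$ directly. I would justify it directly by writing down three explicit three-qubit formal matrices whose layer-spectrum multisets differ. One is the matrix of Section~\ref{sec:example}, another is the computational-basis pattern with all layers connected, and the third is $M_{\mathrm{irr}}$ of Example~\ref{ex:locc-indistinguishable}.
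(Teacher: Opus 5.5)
Your proof is correct and follows the paper's own route: Corollary~\ref{cor:lb} gives the lower bound, Theorem~\ref{thm:ub} plus a Bell-number estimate gives the upper bound, and a double logarithm finishes the argument. The differences are minor improvements: you replace de~Bruijn's asymptotic $\ln B_m=m\ln m-m\ln\ln m-m+o(m)$ by the elementary bound $B_m\le m^m$, which already gives $\log a_n=O(n^2 2^n)$, and you correctly note that Corollary~\ref{cor:lb} needs a starting value $B>2$ (the paper's phrase ``when $B\ge2$'' is slightly off, though the cited values $a_3=17$ and $a_4=27385$ make this harmless).
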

\begin{proof}
The Bell numbers satisfy \cite{debruijn}
\[
\ln B_m
=
m\ln m
-
m\ln\ln m
-
m
+
o(m).
\]
Thus, with $m=2^{n-1}$ and in base-$2$ logarithms,
\[
\log B_{2^{n-1}}
=
2^{n-1}
\left[
n-\log n-1-\log(e\ln2)+o(1)
\right],
\]
where
$\log(e\ln2)=0.91393\dots$.
By Theorem~\ref{thm:ub} and Corollary~\ref{cor:lb},
\[
c\,2^n
\le
\log a_n
\le
n\log B_{2^{n-1}}
=
O(n^22^n).
\]
Equivalently,
\[
2^{c2^n}
\le
a_n
\le
2^{Cn^22^n}
\]
for constants $c,C>0$. Taking $\log$ once more gives
\[
n+O(1)
\le
\log\log a_n
\le
n+2\log n+O(1).
\]
Dividing by $n$ yields
\[
\lim_{n\to\infty}
\frac{\log\log a_n}{n}
=
1,
\]
since $(\log n)/n\to0$. Thus
\[
\log\log a_n
=
n+o(n),
\]
which is equivalent to
\[
a_n
=
2^{2^{n+o(n)}}.
\]
\end{proof}
\begin{table}[t]
\centering
\small
\caption{Numerical results for $n=2,\dots,8$.}
\label{tab:numerics}
\begin{tabular}{ccccc}
\toprule
$n$ & $N=2^n$ & $L_N$ & lower bound & upper bound $B_{2^{n-1}}^n$ \\
\midrule
2 & 4 & 6 & 2 (exact) & 4 \\
3 & 8 & 1365 & 17 (exact) & 3375 \\
4 & 16 & $4.70359\times10^9$ & 27385 (exact) & $2.93766\times10^{14}$ \\
5 & 32 & $3.25125\times10^{26}$ & $3.74983\times10^8$ & $1.26426\times10^{50}$ \\
6 & 64 & $8.84292\times10^{67}$ & $7.03061\times10^{16}$ & $4.41140\times10^{156}$ \\
7 & 128 & $5.06137\times10^{166}$ & $2.47147\times10^{33}$ & $4.47785\times10^{456}$ \\
8 & 256 & $4.32579\times10^{396}$ & $3.05408\times10^{66}$ & $2.51267\times10^{1264}$ \\
\bottomrule
\end{tabular}
\end{table}
Table~\ref{tab:numerics} gives the numerical results for $n=2,\dots,8$. The total number of single layers $L_N$ is computed from Theorem~\ref{thm:count} and the recurrence of Theorem~\ref{thm:recur}. The \emph{lower bound} is obtained by iterating Theorem~\ref{thm:lb} starting from the exact value $a_4=27385$ \cite{chen-djok,chen-jiang}; the values for $n=2,3,4$ are exact. The \emph{upper bound} $B_{2^{n-1}}^n$ comes from Theorem~\ref{thm:ub}.
\begin{remark*}
All values in Table~\ref{tab:numerics} were independently verified using the recurrence of Theorem~\ref{thm:recur} and the Bell-number recurrence of Remark~\ref{rem:bell}.
\end{remark*}
\section{Graph algorithms for equivalence and LOCC discrimination}
\label{sec:alg}
The multigraph representation gives two decision procedures.  The first tests equivalence of OPB formal matrices by combining inexpensive structural invariants with an existing exact graph-isomorphism algorithm.  The second evaluates the connectivity and recursive splitting criteria of Section~\ref{sec:local-locc}; when the recursion succeeds, it also records the complete sequence of local projective measurements needed for finite-round LOCC discrimination.

\subsection{Algorithm I: testing equivalence}
\label{sec:alg-overview}
By Theorem~\ref{thm:equiv}, deciding whether two OPB formal matrices are equivalent is exactly the isomorphism problem for their edge-colored complete multigraphs, with both vertex and color permutations allowed.  The algorithm has two phases.  \emph{Phase 1} compares necessary invariants and rejects pairs that are immediately seen to be inequivalent.  If every invariant agrees, \emph{Phase 2} calls an existing exact graph-isomorphism algorithm on the two associated multigraphs \cite{mckay,babai,cordella}.
The input consists of two OPB formal matrices $M_1,M_2\in \mathcal O(n)$; the output is \True{} if and only if $M_1\sim M_2$. We call $M_1$ the \emph{source} matrix and $M_2$ the \emph{target} matrix.
\subsubsection{Decomposition}
\label{sec:alg-decompose}
A decomposition procedure $\Decompose(M)$ collects, for each variable $\alpha$, its column $\chi(\alpha)$ and its two sides $P_\alpha,N_\alpha$, verifying that each variable occurs in exactly one column and that
$|P_\alpha|=|N_\alpha|>0$.
Otherwise, the input is rejected as not a valid OPB formal matrix. Each variable $\alpha$ forms a \emph{component} $K(P_\alpha,N_\alpha)$ of size
$d_\alpha=|P_\alpha|$.
Each column $j$ is a \emph{layer} consisting of all its components, with the \emph{column spectrum}
$\pi_j(M) = \sort_{\ge} \{d_\alpha:\alpha\in A_j\}$.
The recovered layers determine both the invariants used in Phase 1 and the multigraphs passed to Phase 2.
\subsubsection{Phase 1: filtering by necessary invariants}
\label{sec:alg-phase1}
We compare the following invariants; if any of them differs, the two matrices are necessarily inequivalent. Equality of all invariants is not sufficient, so Phase 2 is still required.
\begin{enumerate}
\item \emph{Size}: $N$ and $n$.
\item \emph{Number of variables}: $v$.
\item \emph{Global spectrum}:
$\pi(M) = \sort_{\ge}(d_1,\dots,d_v)$.
\item \emph{Multiset of column spectra}:
$\{\pi_j(M):1\le j\le n\}$.
Column permutations are allowed, so this is compared as a multiset.
\item \emph{Multiset of edge multiplicities}: for each pair of rows $r<s$, write
\begin{multline*}
\mu(r,s)
=
\#\{j:
r,s\text{ lie in the same block}
\\
\text{ of column }j
\text{ with opposite directions}\},
\end{multline*}
and compare the multisets
$\{\mu(r,s):1\le r<s\le N\}$.
\item \emph{Multiset of vertex degrees}: define
$\deg(r) = \sum_{s\ne r}\mu(r,s)$,
and compare the multisets
$\{\deg(r):r\in V\}$.
\end{enumerate}
Items~(5) and~(6) are precisely the statistics of the edge multiplicities and vertex degrees of the edge-colored complete multigraph, which are invariant under isomorphism. All conditions above are necessary:
$M_1\sim M_2 \implies \Inv(M_1)=\Inv(M_2)$,
but the converse need not hold.
\subsubsection{Phase 2: graph-isomorphism test}
\label{sec:alg-phase2}
If Phase 1 is passed, apply any existing exact graph-isomorphism algorithm to $\Gamma(M_1)$ and $\Gamma(M_2)$, allowing simultaneous permutations of the vertices and the colors.  Return \True{} precisely when the graph-isomorphism algorithm declares the two multigraphs isomorphic; otherwise return \False.

\subsubsection{Correctness}
\label{sec:alg-correct}
\begin{theorem}[Correctness]
\label{thm:alg}
For any $M_1,M_2\in \mathcal O(n)$, Algorithm I outputs \True{} if and only if $M_1\sim M_2$.
\end{theorem}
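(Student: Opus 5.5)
The plan is to prove the two directions separately, using Theorem~\ref{thm:equiv} as the bridge between formal-matrix equivalence and multigraph isomorphism. Algorithm I can output \True{} only when $\Decompose$ accepts both inputs, all Phase~1 invariants agree, and the Phase~2 isomorphism test succeeds. So the proof reduces to three facts: $\Decompose$ never wrongly rejects a valid matrix, every Phase~1 quantity is a class invariant, and Phase~2 decides $\Gamma(M_1)\cong\Gamma(M_2)$ exactly.

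\emph{Soundness of decomposition.} For $M\in\mathcal O(n)$, each variable pair lies in a single column $\chi(\alpha)$ by the conventions of Section~\ref{sec:not-mat}. Lemma~\ref{lem:balance} gives $|P_\alpha|=|N_\alpha|>0$. Hence $\Decompose$ never rejects a valid input. It also returns exactly the layers $\Gamma_j(M)=\bigsqcup_{\alpha\in A_j}K(P_\alpha,N_\alpha)$ of Definition~\ref{def:gamma}, so the multigraphs handed to Phase~2 are the correct $\Gamma(M_i)$.

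\emph{Completeness ($M_1\sim M_2\Rightarrow$ \True).} By the $(\Rightarrow)$ part of Theorem~\ref{thm:equiv}, there are $\sigma,\tau$ with $\mu_{M_1}(\{r,s\},j)=\mu_{M_2}(\{\sigma(r),\sigma(s)\},\tau(j))$. I check each Phase~1 item against this:
\begin{itemize}
\item $N$ and $n$ are trivially preserved.
\item $\sigma$ maps components of $\Gamma_j(M_1)$ to components of $\Gamma_{\tau(j)}(M_2)$, as in the $(\Leftarrow)$ part of that proof. This gives a bijection $\eta$ of variables with $d_{\eta(\alpha)}=d_\alpha$. So $v$ and the global spectrum agree, and the column spectra agree as multisets, matched via $\tau$.
\item The quantity $\mu(r,s)=\sum_j\mu_M(\{r,s\},j)$ is invariant under the vertex permutation $\sigma$ combined with summation over colors, which is permuted by $\tau$. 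Hence the multiset of edge multiplicities and the degree multiset $\deg(r)$ are preserved.
\end{itemize}
Thus Phase~1 passes. Phase~2 uses an exact isomorphism algorithm on isomorphic inputs, so it returns \True.

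\emph{Soundness ($\True\Rightarrow M_1\sim M_2$).} If the algorithm returns \True, the exact graph-isomorphism routine has certified $\Gamma(M_1)\cong\Gamma(M_2)$ in the sense of Definition~\ref{def:gamma-equiv}. The $(\Leftarrow)$ part of Theorem~\ref{thm:equiv} then gives $M_1\sim M_2$. Phase~1 plays no role here: it can only turn a possible \True{} into \False.

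The main obstacle is not mathematical depth but making Phase~2 match Definition~\ref{def:gamma-equiv}. Standard isomorphism software permutes vertices but usually fixes colors, while here colors may also be permuted. I would handle this with an explicit encoding. Build an auxiliary simple graph with the $N$ original vertices, $n$ extra ``color'' vertices carrying a distinct vertex label, and one subdivision vertex for each colored edge $\{r,s\}$ of color $j$, adjacent to $r$, $s$ and the color vertex $j$. I then verify that isomorphisms of these auxiliary graphs that preserve labels correspond bijectively to pairs $(\sigma,\tau)$ as in Definition~\ref{def:gamma-equiv}. Each color vertex has degree equal to its layer's edge count. It is distinguished from the other vertex classes by its label, so any labeled isomorphism maps color vertices to color vertices; this yields $\tau$, and the restriction to original vertices yields $\sigma$. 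Alternatively, one can loop over all $n!$ choices of $\tau$ and run color-preserving isomorphism for each. With this encoding in place, the rest of the proof is the routine bookkeeping above.
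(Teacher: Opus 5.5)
Your proposal is correct and follows the same route as the paper's proof. Phase~1 compares only invariants of the equivalence class, so it never rejects an equivalent pair; Phase~2 decides $\Gamma(M_1)\cong\Gamma(M_2)$ exactly; and Theorem~\ref{thm:equiv} identifies that condition with $M_1\sim M_2$. The extra points you supply are details the paper leaves implicit rather than a different argument: that $\Decompose$ never rejects a valid input (by Lemma~\ref{lem:balance}), and the auxiliary-graph encoding that lets color-preserving isomorphism software also permute colors.
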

\begin{proof}
Every quantity compared in Phase 1 is invariant under equivalence, so this phase never rejects an equivalent pair.  If all invariants agree, the exact graph-isomorphism algorithm used in Phase 2 returns \True{} exactly when $\Gamma(M_1)$ and $\Gamma(M_2)$ are isomorphic.  Theorem~\ref{thm:equiv} identifies this condition with $M_1\sim M_2$.
\end{proof}
\subsection{Algorithm II: deciding finite-round LOCC distinguishability}
\label{sec:alg-locc}
The input is a standard-form realization $\mathcal B(M)$, and the algorithm uses the color layers of $\Gamma(M)$ as follows.
\begin{enumerate}
\item If $\Gamma_j(M)$ is disconnected for every $j$, return that $\mathcal B(M)$ is locally irreducible and hence cannot be perfectly distinguished by finite-round LOCC.  If at least one root layer is connected, continue to the recursive test.
\item Evaluate the recursion $\mathsf D(U,J)$ in \eqref{eq:dist-recursion}, beginning at $(V,\{1,\ldots,n\})$.  At a state $(U,J)$ with $|U|>1$, examine every $j\in J$ for which $\Gamma_j(M)[U]$ is connected.  Such a color determines the unique variable $\alpha(j,U)$ and the two nonempty candidate sets
\[
U_0=U\cap P_{\alpha(j,U)}(M),
\qquad
U_1=U\cap N_{\alpha(j,U)}(M).
\]
Recursively test $(U_0,J\setminus\{j\})$ and $(U_1,J\setminus\{j\})$.  If both calls return \True, record $j$, $\alpha(j,U)$, and the two child states at the current node, and return \True.  If no color succeeds, return \False.  Memoizing each state $(U,J)$ avoids reevaluating the same subproblem.
\end{enumerate}
If the root returns \False, Theorem~\ref{thm:locc-tree} implies that the basis is not perfectly distinguishable by finite-round LOCC.  This includes the locally reducible case in which a connected root layer exists but every recursive continuation fails.

If the root returns \True, the recorded choices form a complete color-splitting tree and specify the LOCC protocol without any further search.  At a recorded node $(U,J)$ with choice $(j,\alpha(j,U))$, party $j$ performs the projective measurement
\[
\left\{
|u_{\alpha(j,U)}\rangle\langle u_{\alpha(j,U)}|,
|u_{\alpha(j,U)}^\perp\rangle\langle u_{\alpha(j,U)}^\perp|
\right\}
\]
and broadcasts the outcome.  The first outcome replaces the active candidates by $U_0$ and the second by $U_1$; the parties then follow the corresponding recorded child with party $j$ removed from $J$.  Every branch terminates at a singleton leaf, whose row label identifies the initial product state.  Since one party is removed at each internal node, the protocol has depth at most $n$.

The correctness of Algorithm II follows from Corollary~\ref{cor:local-irreducibility} and Theorem~\ref{thm:locc-tree}.  With memoization, every state $(U,J)$ reached by the recursion is evaluated once.  If $R(M)$ denotes the number of such states, a crude bound obtained by checking all remaining induced color layers at each state is $O(R(M)nN^2)$ time and $O(R(M)N)$ space.  A successful recorded tree has at most $2N-1$ nodes.
\section{Conclusion}
\label{sec:concl}
We have associated to every $n$-qubit OPB formal matrix an edge-colored complete multigraph on the vertex set
$V=\{1,\dots,2^n\}$,
in which each variable pair contributes a balanced complete bipartite graph, each column contributes a color layer, and the $n$ color layers cover $K_{2^n}$. On this basis, we proved the equivalence theorem,
$M_1\sim M_2 \iff \Gamma(M_1)\cong\Gamma(M_2)$,
thereby reducing the classification of OPBs to a graph-isomorphism problem.
The same representation also has an operational use. For standard-form realizations, we proved that local irreducibility is equivalent to disconnection of every color layer and that perfect finite-round LOCC distinguishability is equivalent to the existence of a complete color-splitting tree. Whenever the latter condition holds, the tree directly specifies a protocol of local projective measurements.

We also proved the variable bound
$v\le2^n-1$
via subcube partitions and Tarsi's lemma, gave the exact count $L_N(\lambda)$ of single layers with prescribed spectrum together with its recurrence, and established
\[
a_{n+1}
\ge
\binom{a_n+1}{2},
\qquad
a_n
\le
B_{2^{n-1}}^n,
\]
with the resulting asymptotic behavior
\[
a_n
=
2^{2^{n+o(n)}},
\qquad
n\to\infty.
\]
Moreover, we developed an exact algorithm for testing equivalence by combining a filter of necessary invariants with an existing exact graph-isomorphism algorithm.  The same graph representation gives a constructive LOCC procedure: the recursive test either certifies that no complete color-splitting tree exists or records a tree whose nodes specify the successive local projective measurements and their conditional branches.

Two natural problems remain open. First, it would be interesting to obtain tighter lower and upper bounds on $a_n$, thereby narrowing the gap between the current estimates and achieving a more precise understanding of the growth of the number of equivalence classes. Second, it remains to find an effective algorithm that enumerates all equivalence classes of $n$-qubit OPBs for a given $n$. Such an algorithm would make it possible to obtain exact classifications for larger values of $n$ and may also reveal further structural properties of multiqubit OPBs.


\begin{thebibliography}{17}
\bibitem{ben-nonloc}
C.~H. Bennett, D.~P. DiVincenzo, C.~A. Fuchs, T.~Mor, E.~Rains, P.~W. Shor, J.~A. Smolin, and W.~K. Wootters,
\emph{Quantum nonlocality without entanglement},
Phys. Rev. A \textbf{59} (1999), 1070--1091.
DOI: 10.1103/PhysRevA.59.1070.
\bibitem{feng-shil}
Y.~Feng and Y.~Shi,
\emph{Characterizing locally indistinguishable orthogonal product states},
IEEE Trans. Inf. Theory \textbf{55} (2009), 2799--2806.
DOI: 10.1109/TIT.2009.2018330.
\bibitem{lebl}
J.~Lebl, A.~Shakeel, and N.~Wallach,
\emph{Local distinguishability of generic unentangled orthonormal bases},
Phys. Rev. A \textbf{93} (2016), 012330.
DOI: 10.1103/PhysRevA.93.012330; arXiv:1502.06639.
\bibitem{chen-jiang}
L.~Chen and Y.~Jiang,
\emph{Nonlocality via multiqubit orthogonal product bases},
Phys. Scr. \textbf{99} (2024), 065113.
DOI: 10.1088/1402-4896/ad46c7.
\bibitem{ben-unext}
C.~H. Bennett, D.~P. DiVincenzo, T.~Mor, P.~W. Shor, J.~A. Smolin, and B.~M. Terhal,
\emph{Unextendible product bases and bound entanglement},
Phys. Rev. Lett. \textbf{82} (1999), 5385--5388.
DOI: 10.1103/PhysRevLett.82.5385.
\bibitem{dmss}
D.~P. DiVincenzo, T.~Mor, P.~W. Shor, J.~A. Smolin, and B.~M. Terhal,
\emph{Unextendible product bases, uncompletable product bases and bound entanglement},
Commun. Math. Phys. \textbf{238} (2003), 379--410.
DOI: 10.1007/s00220-003-0877-6.
\bibitem{shi-graph}
F.~Shi, G.~Bai, X.~Zhang, Q.~Zhao, and G.~Chiribella,
\emph{Graph-theoretic characterization of unextendible product bases},
Phys. Rev. Research \textbf{5} (2023), 033144.
DOI: 10.1103/PhysRevResearch.5.033144.
\bibitem{xu-iso}
G.-B.~Xu, Y.-Y.~Zhu, D.-H.~Jiang, and Y.-G.~Yang,
\emph{Isomorphism of nonlocal sets of orthogonal product states in bipartite quantum systems},
Physica A \textbf{619} (2023), 128734.
DOI: 10.1016/j.physa.2023.128734.
\bibitem{chen-djok}
L.~Chen and D.~\v{Z}. \DJ{}okovi\'c,
\emph{Orthogonal product bases of four qubits},
J. Phys. A: Math. Theor. \textbf{50} (2017), 395301.
DOI: 10.1088/1751-8121/aa8546; arXiv:1606.06254.
\bibitem{aharoni}
R.~Aharoni and N.~Linial,
\emph{Minimal non-two-colorable hypergraphs and minimal unsatisfiable formulas},
J. Comb. Theory Ser. A \textbf{43} (1986), 196--204.
DOI: 10.1016/0097-3165(86)90060-9.
\bibitem{firmus}
Y.~Filmus, E.~A. Hirsch, S.~Kurz, F.~Ihringer, A.~Riazanov, A.~V. Smal, and M.~Vinyals,
\emph{Irreducible subcube partitions},
Electron. J. Combin. \textbf{30} (2023), P3.29.
DOI: 10.37236/11862.
\bibitem{debruijn}
N.~G. de~Bruijn,
\emph{Asymptotic Methods in Analysis},
2nd ed., North-Holland, Amsterdam, 1961, pp.~104--108.
\bibitem{mckay}
B.~D. McKay and A.~Piperno,
\emph{Practical graph isomorphism, II},
J. Symb. Comput. \textbf{60} (2014), 94--112.
DOI: 10.1016/j.jsc.2013.09.003.
\bibitem{babai}
L.~Babai,
\emph{Graph isomorphism in quasipolynomial time},
in: Proceedings of the 48th Annual ACM Symposium on Theory of Computing (STOC), 2016, pp.~684--697.
DOI: 10.1145/2897518.2897542.
\bibitem{cordella}
L.~P. Cordella, P.~Foggia, C.~Sansone, and M.~Vento,
\emph{A (sub)graph isomorphism algorithm for matching large graphs},
IEEE Trans. Pattern Anal. Mach. Intell. \textbf{26} (2004), 1367--1372.
DOI: 10.1109/TPAMI.2004.75.
\bibitem{de-rinaldis}
S.~De Rinaldis,
\emph{Distinguishability of complete and unextendible product bases},
Phys. Rev. A \textbf{70} (2004), 022309.
DOI: 10.1103/PhysRevA.70.022309; arXiv:quant-ph/0304027.
\end{thebibliography}
\end{document}